\DeclareMathOperator{\Ai}{Ai}
\DeclareMathOperator{\erf}{erf}
\DeclareMathOperator{\EXP}{Exp}
\newcommand{\abs}[1]{\mathopen|#1\mathclose|}
\newcommand{\ttarget}{p}
\newcommand{\twork}{t}
\newcommand{\R}{\mathbb{R}}
\newcommand{\Z}{\mathbb{Z}}
\newcommand{\rnd}[1]{\left\langle #1 \right\rangle}
\newcommand{\transp}[1]{#1^{\operatorname T}}
\renewcommand{\Phi}{\mathcal{F}}
\newtheorem{corollary}{Corollary}
\newtheorem{lemma}{Lemma}
\newtheorem{notation}{Notation}
\newtheorem{proposition}{Proposition}
\newtheorem{theorem}{Theorem}
\newcommand{\assign}{:=}
\newcommand{\mathd}{\mathrm{d}}
\begin{document}

\setlength{\emergencystretch}{1em}
\abovedisplayskip=3pt plus 4pt
\abovedisplayshortskip=0pt plus 3pt
\belowdisplayskip=3pt plus 4pt
\belowdisplayshortskip=0pt plus 3pt
\setlength{\textfloatsep}{5pt} 
\setlength{\abovecaptionskip}{0pt}
\setlength{\belowcaptionskip}{0pt}

\title{Multiple-precision evaluation of the Airy $\Ai$ function with
reduced cancellation}
\author{
    \IEEEauthorblockN{Sylvain Chevillard\\}
    \IEEEauthorblockA{
        Inria, Apics Project-Team, Sophia Antipolis, France\\
        \href{mailto:sylvain.chevillard@inria.fr}{sylvain.chevillard@inria.fr}
    }
    \and
    \IEEEauthorblockN{Marc Mezzarobba\\}
    \IEEEauthorblockA{
        Inria, LIP (CNRS-ENS-Inria-UCBL), ENS de  Lyon, France\\
        \href{mailto:marc@mezzarobba.net}{marc@mezzarobba.net}
    }
}
\maketitle

\begin{abstract}
The series expansion at the origin of the Airy function~$\Ai(x)$ is alternating and hence problematic to evaluate for $x > 0$ due to cancellation.
Based on a method recently proposed by Gawronski,
M\"uller, and Reinhard, we exhibit two functions $F$~and~$G$, both with
nonnegative Taylor expansions at the origin, such that $\Ai(x) = G (x)
/ F (x)$. The sums are now well-conditioned, but the Taylor coefficients
of~$G$ turn out to obey an ill-conditioned three-term recurrence. We use the
classical Miller algorithm to overcome this issue. We bound
all errors and our implementation allows an arbitrary and certified accuracy,
that can be used, e.g., for providing correct rounding in arbitrary precision.
\end{abstract}

\begin{IEEEkeywords}
Special functions; algorithm; numerical evaluation; arbitrary precision; Miller method; asymptotics; correct rounding; error bounds.
\end{IEEEkeywords}

\newcommand\mightbeomitted{}

Many mathematical functions (e.g., trigonometric functions, $\erf$, Bessel functions) have a Taylor series of the form
\begin{equation}
  y(x) = x^s\,\sum_{n = 0}^{\infty} y_n\,x^{dn}, \hspace{1em} y_n \sim (-1)^n \lambda \frac{\alpha^n}{n!^{\kappa}}
  \label{eq:intro}
\end{equation}
with $d, s \in \Z$ and $\alpha, \kappa > 0$.
For large $x > 0$, the computation in finite precision arithmetic of such a sum
is notoriously prone to {\emph{catastrophic cancellation}}.
Indeed, the terms $|y_n x^n|$ are first growing before the series ``starts to
converge'' when $n^{\kappa} \ge \alpha x$. In particular, when $n^{\kappa} \approx \alpha x$,
the terms $y_n x^n$ usually get much larger than~$y(x)$. Eventually, their leading
bits cancel out while lower-order bits that actually contribute to the first
significant digits of the result get lost in roundoff errors.

This cancellation phenomenon makes the direct
computation by Taylor series impractical for large values of $x$.
Often, the function~$y(x)$ admits an asymptotic expansion as
{$x \rightarrow + \infty$}
that can be used very effectively to obtain numerical approximations when~$x$
is large, but might not provide enough accuracy (at least without resorting to sophisticated resummation methods) for intermediate values of~$x$.

In the case of the error function $\operatorname{erf}(x)$, a
classical trick going back at least to Stegun and
Zucker~{\cite{StegunZucker1970}} is to compute $\operatorname{erf}(x)$
as $G(x) / F(x)$ where $F(x) =
e^{x^2}$ and~{\cite[Eq.~7.6.2]{DLMF}}
\begin{equation}
  G(x) = e^{x^2} \operatorname{erf}(x) = \frac{2
  x}{\sqrt{\pi}}  \sum_{n = 0}^{\infty} \frac{2^n}{1 \cdot 3 \cdots(2
  n + 1)} x^{2 n} . \label{eq:std recond erf}
\end{equation}
The benefit of this transformation is that $F$~and~$G$ are power series with
nonnegative coefficients, and can thus be computed without cancellation.
Algorithms based on~\eqref{eq:std recond erf} tend to behave well in some
range $a < x < b$ where $x$~is large enough for cancellation to be problematic
but small enough to make the use of asymptotic expansions at infinity
inconvenient.
Note that the obvious way to compute $y(x) = e^{-
x}$ for $x > 0$ fits into the same framework, now with $G(x) = 1$
and $F(x) = e^x$.

Gawronski, M\"uller and
Reinhard~{\cite{GawronskiMullerReinhard2007,Reinhard2008}} provide elements to
understand where these rewritings ``come from''. They relate the amount of
cancellation in the summation of a series~\eqref{eq:intro} to the shape of the
Phragm\'en--Lindel\"of indicator of~$y$, a classical tool from the theory of
entire functions~{\cite{Levin1996}}. This description allows them to state
criteria for choosing auxiliary series suitable for the evaluation of a given
entire function in a given sector of the complex plane. They apply their
method (called the ``GMR method'' in what follows) to obtain ``reduced
cancellation'' evaluation algorithms for the error function and other related
functions in various sectors.

In this article, we are interested in the evaluation for positive~$x$ of the
Airy function $\Ai$~\cite[Chap.~9]{DLMF}.
The function $\Ai(x)$ satisfies the linear ordinary
differential equation (LODE)
\begin{equation}
  \Ai''(x) - x \Ai(x) = 0
  \label{eq:deq Ai}
\end{equation}
with initial values
\[
  \Ai(0) = A \assign 3^{-2 / 3} \Gamma (\tfrac{2}{3})^{-1}, \hspace{1em}
    \Ai'(0) = - B \assign -3^{-1 / 3} \Gamma (\tfrac{1}{3})^{-1}.
\]
The classical existence theorem for LODE with complex analytic coefficients
implies that $\Ai(x)$ is an entire function; and
solving~\eqref{eq:deq Ai} by the method of power series yields the Taylor
expansion $\Ai(x) = A f(x^3) - Bxg(x^3)$, where
\begin{align*}
  f(x) &=
  \sum_{n=0}^{{\infty}}{\frac{1{\cdot}4{\cdots}(3n-2)}{(3n)!}} x^n, &
  g(x) &=
  \sum_{n=0}^{{\infty}}{\frac{2{\cdot}5{\cdots}(3n-1)}{(3n+1)!}} x^n.
\end{align*}
Observe that while $f$~and~$g$ are easy enough to evaluate individually, the
difference~$A f(x^3) - Bxg(x^3)$ causes catastrophic cancellation when computed in
approximate arithmetic.

Using the GMR method, we derive a reduced cancellation algorithm for
computing $\Ai(x)$.
To our best knowledge, our algorithm for evaluating $\Ai(x)$ is new, and is the
most efficient multiple-precision evaluation of $\Ai(x)$
when $x$ is neither too small nor too large, while the precision is not large
enough to make methods based on binary splitting~\cite{BrentZimmermann2010} competitive.

Besides the new application, the main difference between the
present article and the work of Gawronski {\emph{et al.}} is our setting of
multiple-precision arithmetic ``\`a la
MPFR~{\cite{FousseHanrotLefevrePelissierZimmermann2007}}''. Specifically, on
the one hand, we are interested in {\emph{arbitrary precision}} arithmetic
rather than machine precision only. This makes it impossible, for instance, to
tabulate the coefficients of auxiliary functions when these turn out to be
hard to compute. Also, we are looking for {\emph{rigorous error bounds}}
instead of experimental error estimates. On the other hand, we restrict
ourselves to numerical evaluation on a half line instead of a complex sector (though, in principle, the basic ideas generalize).

This article focuses on providing a complete algorithm in the specific
case of $\Ai(x)$, $x > 0$. Yet, it should also be seen a
{\emph{case study}}, part of an effort to understand what the GMR method can
bring in the context of multiple-precision computation, and, perhaps more
importantly, how general and systematic it can be made. We discuss this last
point further in Sec.~\ref{sec:conclusion}.

The rest of this text is organized as follows.
In Sec.~\ref{sec:GMR}, we use the GMR method to choose the functions $F$~and~$G$.
Then, in Sec. \ref{sec:F}~and~\ref{sec:G}, we derive a few mathematical properties of these functions, including recurrences for their series expansions and various bounds.
Sections~\ref{sec:roundoff Analysis} to~\ref{sec:evaluation of F} contain the details of our algorithm and its error analysis.
Finally, in Sec.~\ref{sec:implementation}, we briefly describe our implementation of the algorithm.

\section{The GMR method}
\label{sec:GMR}

We now review the GMR method and apply it to obtain candidate
auxiliary series for the evaluation of the $\Ai$ function. Since the
method itself is not crucial for our results, we summarize it in intuitive
terms and refer the reader to the original
works~{\cite{GawronskiMullerReinhard2007,Reinhard2008}} for more careful
statements.

\newcommand\myfrac[2]{#1/#2}
\newcommand\axes{
  \draw[->] (0,-2/3-0.1) -- (0,4/3+0.2);
  \draw[->] (-pi-.2,0) -- (pi+.2,0);
  \begin{scope}[inner sep=1pt, outer sep=2pt]
    \foreach \x / \xtext in {
          -pi / -\pi,
          -0.666*pi / -\myfrac{2\pi}{3},
          -0.333*pi / -\myfrac\pi3,
          0.333*pi / \myfrac\pi3,
          0.666*pi / \myfrac{2\pi}{3},
          pi / \pi,
    }
      \draw (\x,1pt) -- (\x,-1pt) node[below, fill=white] {$\xtext$};
    \foreach \i in {-2, 2, 4}
      \draw (1pt,\i/3) -- (-1pt,\i/3) node[left, fill=white] {$\i/3$};
    \end{scope}
}
\tikzset{x=.64cm,y=.8cm,font=\scriptsize}

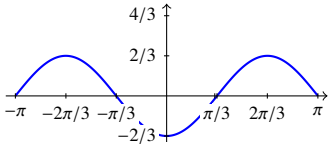
\begin{figure}
  \centerline{
    \begin{tikzpicture}
      \draw[color=blue,thick] plot[smooth,domain=-pi:pi,samples=100]
        (\x, {-2/3*cos(1.5*\x r)});
      \axes
    \end{tikzpicture}
  }
  \caption{The indicator function $h$ of $\Ai$.\label{fig:h Ai}}
\end{figure}

The starting point of the GMR method is the following observation. Let $y
(z) = \sum_{n \geq 0} y_n z^n$ be an entire function. Assume
that we have, in some intentionally vague sense,
\begin{equation}
  y(re^{i \theta}) \approx \exp(h(\theta)
  r^{\;\rho}) \label{eq:or mag y}
\end{equation}
for large~$r$. (To make things precise, we would assume that $y$ has finite
{\emph{order}} $\rho$, and that $h$~is its {\emph{indicator function}} with
respect to~$\rho$~{\cite{Levin1996}}.)

We consider the computation of~$y(z)$ in floating-point
arithmetic using its series expansion. It is well-known~\cite{Chevillard2012} that, if the sum is performed in floating-point arithmetic of precision~$t$, the relative error between $y(z)$ and the computed sum is roughly given by $2^{-t}\,(\sum_{n \ge 0} |y_n z^n|)/|y(z)|$. The sum $\sum_{n \ge 0} |y_n z^n|$ is larger than $\max_{n\ge 0} |y_n z^n|$, and usually of the same order of magnitude. Therefore, the number of significant binary digits ``lost by cancellation'' is roughly
\[ \log_2(\max_{n \geq 0}  \abs{y_n z^n}) - \log_2
   \abs{y(z)} . \]

Denote
$M(r) = \sup_{\abs{z} = r} \abs{y(z)}$
for all $r>0$.
Cauchy's formula implies $\max_n (\abs{y_n} r^n)
\leq M(r)$, and under a suitable version of
hypothesis~\eqref{eq:or mag y}, one can actually show that $\max_n (
\abs{y_n} r^n) \approx M(r)$. Hence, the loss of
precision by cancellation in the evaluation of $y(re^{i \theta}
)$ is about
\[ \log_2 \frac{M(r)}{\abs{y(re^{i \theta})
  }} \approx \left[(\max h) - h(\theta)
   \right] r^{\;\rho} . \]
For instance, when the $y_n$ all have the same complex argument, the maximum
of~$h$ is reached for~$\theta = 0$, in accordance with the fact that the sum
is optimally conditioned.

In the case of the Airy $\Ai$ function, the following asymptotic
equivalent holds as $z$~tends to complex infinity in any open sector that
avoids the negative real axis~{\cite[Eq.~9.7.5]{DLMF}}:
\begin{equation}
  \Ai(z) \sim \widetilde{\Ai}(z)
  \assign \frac{\exp(- \frac{2}{3} z^{3 / 2})}{2 \sqrt{\pi}
  z^{1 / 4}} \label{eq:Ai-tilde} .
\end{equation}
Additionally, $\Ai(x)$ is bounded for $x < 0$. Hence, we
may take $\rho = 3 / 2$ and
\begin{equation}
  h(\theta) = - \tfrac{2}{3} \cos (\tfrac32 \theta),
  \hspace{2em} - \pi < \theta \leq \pi \label{eq:h Ai}
\end{equation}
(see Figure~\ref{fig:h Ai}). The loss of precision is roughly proportional to
$1 + \cos (\frac{3}{2} \theta)$. It is minimal in the directions of fastest
growth $\theta = \pm \frac{2}{3} \pi$, and maximal for~$\theta = 0$.

If now two entire functions $y$~and~$F$ both satisfy conditions of the
form~\eqref{eq:or mag y} with the same~$\rho$ but different~$h$ (say $h_y$ and
$h_F$, respectively), we may expect that
\begin{equation}
  G(z) = F(z) y(z) \approx \exp
 ([h_y(\theta) + h_F(\theta)]
  r^{\;\rho}) . \label{eq:hG}
\end{equation}
The GMR method consists in reducing the summation of the series~$y$ for $z$ in
some given sector to that of an {\emph{auxiliary series}} $F(z)$
and a {\emph{modified series}} $G(z)$ related by~\eqref{eq:hG}.
The value of $y(z)$ is then recovered as $G(z) / F
(z)$. The auxiliary series is chosen, based on the shape
of~$h_y$, so that both $h_F$ and $h_G = h_y + h_F$ take values close to their
maximum in the sector of interest.

There may be multiple choices, and it is not clear in general which one is better, except that the coefficients of $F$~and~$G$ should be as easy to compute as possible.
Gawronski {\emph{et al.}} usually take $F(z) = \exp(
az^{\;\rho})$ and search for a value of~$a$ that makes $(\max h_G
) - h_G$ as small as possible on a whole complex sector. The choice of
exponentials as auxiliary series is not appropriate in the case of
$\Ai$, since $\exp(z^{\;\rho})$ is an entire function of~$z$
only for integer~$\rho$.

However, as we are interested in one direction only, we can easily build a
suitable auxiliary series from~$\Ai$ itself. Indeed, we may ``shift''
the indicator function of~$\Ai$ by $2 \pi / 3$ to the left or to the
right by changing $z$ to $j^{\pm 1} z$, where $j = e^{2 \pi i / 3}$. (Note
that this is {\emph{not}} the same as changing~$\theta$ to $\theta \pm
\frac{2}{3} \pi$ in~\eqref{eq:h Ai}.) When we add such a shifted
indicator to the original~$h_{\Ai}$, one of the humps of the curve cancels out with
the valley in the middle.

\begin{figure}
  \begin{tikzpicture}
    \draw[blue,thick] plot[smooth,domain=-pi/3:pi/3] (\x, {4/3*cos(1.5*\x r)});
    \axes
    \begin{scope}[color=blue, thick]
      \draw (-pi,0) -- (-pi/3,0);
      \draw (pi/3,0) -- (pi,0);
    \end{scope}
  \end{tikzpicture}
  \hfill
  \begin{tikzpicture}
    \begin{scope}[color=blue, thick]
      \draw plot[smooth,domain=-pi/3:pi/3] (\x, {2/3*cos(1.5*\x r)});
      \draw plot[smooth,domain=-pi/3:pi/3] (\x-2/3*pi, {2/3*cos(1.5*\x r)});
      \draw plot[smooth,domain=-pi/3:pi/3] (\x+2/3*pi, {2/3*cos(1.5*\x r)});
    \end{scope}
    \axes
  \end{tikzpicture}
  \caption{Indicator functions for~$F$ (left) and~$G$ (right).\label{fig:h F G}}
\end{figure}
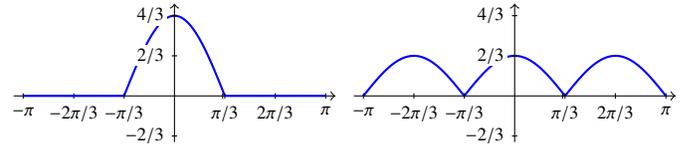

Using this idea, we set
\begin{equation}
  F(x) = \Ai(jx) \Ai(j^{- 1} x), \hspace{1em} G(x) = F(x) \Ai (x).
  \label{eq:def F G}
\end{equation}
The indicator functions of $F$~and~$G$ are pictured on Figure~\ref{fig:h F G}.
Based on their shapes, we expect that both series are optimally conditioned on
the positive real axis.
We shall prove that this is indeed the case in the
next two sections.

This second method of constructing auxiliary series seems to be new, and applies to many cases.
For instance, applying it to the error function leads to
\begin{align}
  \label{eq:new recond erf}
    F(x) &= -i \erf{i x} = \frac{2x}{\sqrt \pi} \sum_{n=0}^\infty \frac{1\cdot3\cdots(2n-1)}{(2n+1)!} 2^{n} x^{2n},
\end{align}
\begin{align}
    G(x) &= F(x) \erf x = \frac{8 x^2}{\pi} \sum_{n=0}^\infty \frac{2\cdot4\cdots(4n)}{(4n+2)!} 4^n x^{4n},
\end{align}
a slightly worse alternative to~\eqref{eq:std recond erf}.
The advantage of~\eqref{eq:std recond erf} comes from the fact that $e^{x^2}$ is faster to evaluate than~\eqref{eq:new recond erf}.

\section{The auxiliary series~$F$}
\label{sec:F}

The functions $F$~and~$G$ being chosen, we need to establish appropriate formulas to evaluate them, along with error bounds.
Much of our analysis will be based on the following simple
estimate~{\cite[Chap.~4, {\S}4.1]{Olver1997}}, where $\widetilde{\Ai}$ was defined in Eq.~\eqref{eq:Ai-tilde}.

\begin{lemma}
  \label{lem:asympt approx Ai}The Airy function~$\Ai$ satisfies
  \[ \abs{\Ai(re^{i \theta}) / \widetilde{\Ai}
    (re^{i \theta}) - 1} \leq \eta_1(\theta
    ) r^{- 3 / 2}, \hspace{2em} \abs{\theta} < \pi, \]
  where
  $\eta_1(\theta) = \frac{5}{48} (\cos \frac{\theta}{2})^{-7/2}$.
\end{lemma}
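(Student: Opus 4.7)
The plan is to derive this bound from Olver's strict error analysis for the Liouville--Green (LG) expansion of the Airy differential equation $w''(z) = z w(z)$, which is the framework in which $\widetilde{\Ai}$ naturally arises as the leading LG approximation of the subdominant solution.

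First I would set up the LG variable $\zeta = \tfrac{2}{3} z^{3/2}$, under which the Airy equation is brought into the normal form to which Olver's general theorems apply. In that setting the leading-order approximations are $z^{-1/4} e^{\pm \zeta}$, and $\widetilde{\Ai}(z) = e^{-\zeta}/(2\sqrt{\pi} z^{1/4})$ is the unique recessive solution at $+\infty$. Writing $\Ai(z) = \widetilde{\Ai}(z)(1 + R(z))$, the task becomes to bound $R(z)$.

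Next I would apply Olver's remainder estimate: for the recessive solution, $|R(z)| \le u_1 |\zeta|^{-1} \, \mathcal{V}$, where $u_1 = 5/72$ is the first coefficient of the Poincar\'e expansion, and $\mathcal{V}$ is a variational factor controlling the total variation of an explicit auxiliary function along a progressive path from $z$ to $+\infty$ (i.e.\ a path along which $\operatorname{Re} \zeta$ is non-decreasing). Since $|\zeta| = \tfrac{2}{3} r^{3/2}$, the prefactor $u_1/|\zeta|$ already equals $\tfrac{5}{48} r^{-3/2}$, which explains the numerical constant $5/48$ appearing in $\eta_1$.

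Finally I would evaluate $\mathcal{V}$ along a suitable path from $re^{i\theta}$ to $+\infty$. A straight ray $\rho \mapsto \rho e^{i\theta}$ is progressive whenever $|\theta| < 2\pi/3$, and on such a ray the variational integral reduces to $\int_r^\infty \rho^{-5/2}\, d\rho$ scaled by a factor $(\cos(\theta/2))^{-7/2}$ coming from the relation between $|d\zeta|$ and $\operatorname{Re}(d\zeta)$ on that ray. For $2\pi/3 \le |\theta| < \pi$ the straight ray is no longer progressive, and the path must be deformed to follow the steepest-descent direction of $\zeta$ before continuing to infinity; the same computation still yields the factor $(\cos(\theta/2))^{-7/2}$ after one accounts for the change of path.

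The main obstacle is this last step: selecting a path that is progressive throughout the full sector $|\theta| < \pi$ and carrying out the variational bound carefully enough to extract the precise exponent $-7/2$ in $\cos(\theta/2)$ rather than a weaker constant. Everything else is standard bookkeeping in Olver's framework; one could alternatively simply cite Chapter 11 of~\cite{Olver1997}, where essentially this inequality is proved.
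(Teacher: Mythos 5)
The paper does not actually prove this lemma: it is quoted verbatim from Olver~\cite[Chap.~4, \S4.1]{Olver1997}, where the bound is obtained from an explicit error analysis of a Laplace-type contour-integral representation of $\Ai$ (a ``remainder bounded by the first neglected term, times an explicit angular factor'' result), not from the Liouville--Green machinery you invoke. Your identification of the constant, $\tfrac{5}{48}r^{-3/2} = u_1\,|\zeta|^{-1}$ with $u_1 = \tfrac{5}{72}$ and $\zeta = \tfrac23 z^{3/2}$, is correct, and ending by citing Olver is in the same spirit as the paper; but the chapter you point to and, more importantly, the mechanism you sketch do not deliver the stated inequality.

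The gap is precisely at the step you flag as the ``main obstacle'', and it is not just bookkeeping. First, Olver's LG error bound for the recessive solution has the form $|R(z)| \le \exp\bigl(\tfrac12\mathcal V\bigr)-1$, where $\mathcal V$ is the variation of the error-control function along a progressive path; along the positive real axis this gives $\exp\bigl(\tfrac{5}{48}r^{-3/2}\bigr)-1$, which is \emph{strictly larger} than $\tfrac{5}{48}r^{-3/2}$. Since at $\theta = 0$ the lemma's bound coincides with the true asymptotic size of $|\Ai/\widetilde\Ai - 1| \sim u_1|\zeta|^{-1}$, there is no slack to absorb the extra terms: the plain variational LG estimate can never be squeezed below the first-neglected-term bound, so your second step already fails to give the inequality as stated. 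Second, the angular factor: the quantities entering the progressive-path computation live in the $\zeta$-plane, where the relevant phase is $\operatorname{ph}\zeta = \tfrac32\theta$; for instance, on the ray $\operatorname{ph} z=\theta$ one has $|\mathrm d\zeta|/\operatorname{Re}(\mathrm d\zeta) = \sec\tfrac{3\theta}{2}$, which blows up already at $|\theta|=\pi/3$, and the standard LG-type bounds in the rotated sectors degenerate at $|\operatorname{ph}\zeta| = \pi/2,\ \pi,\ 3\pi/2$ with $\sec$, $\csc$ or $\chi$-factors of $\tfrac32\theta$. So the factor $(\cos\tfrac\theta2)^{-7/2}$ does not ``drop out'' of the computation you describe, and the extension across $2\pi/3 \le |\theta| < \pi$ is only asserted. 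To justify the lemma one should either reproduce Olver's integral-representation analysis (which is where both the sharp constant $5/48$ and the $(\cos\tfrac\theta2)^{-7/2}$ come from) or simply cite it, as the paper does.
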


Now consider the power series expansion of~$F$ at the origin:
\begin{equation}
  F(x) = \Ai(jx) \Ai(j^{- 1} x
 ) = \sum_{n = 0}^{\infty} F_n x^n . \label{eq:series F}
\end{equation}

\begin{proposition}
  \label{prop:rec F}The coefficients~$F_n$ are positive and satisfy the
  two-term recurrence relation
  \begin{equation}
   (n + 1) (n + 2) (n + 3) F_{n + 3}
    - 2(2 n + 1) F_n = 0 \label{eq:rec F}
  \end{equation}
  with initial values
  \[
    F_0 = 3^{-4 / 3} \Gamma(\tfrac{2}{3})^{-2},
    \hspace{2em} F_1 = (2 \sqrt{3} \pi)^{-1},
  \hspace{2em} F_2 = 3^{-2 / 3} \Gamma(\tfrac{1}{3})^{-2}. \]
  
\end{proposition}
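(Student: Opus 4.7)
The plan is to derive a linear ODE with polynomial coefficients for $F$ and read off the recurrence from its Taylor series. The key observation is that $j^{3}=1$ turns both $u(x) \assign \Ai(jx)$ and $v(x) \assign \Ai(j^{-1}x)$ into solutions of the \emph{same} second-order equation $y'' = xy$: indeed, differentiating twice,
\[
  u''(x) = j^{2}\Ai''(jx) = j^{2}(jx)\,\Ai(jx) = j^{3}x\,u(x) = x\,u(x),
\]
and identically for $v$. Hence $F = uv$ is a product of two solutions of a second-order linear ODE, so it satisfies a third-order one.

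To obtain that ODE explicitly, I would just differentiate $F$ three times, substituting $u'' = xu$ and $v'' = xv$ whenever second derivatives appear:
\[
  F'' = u''v + 2u'v' + uv'' = 2xF + 2u'v',
\]
and then
\[
  F''' = 2F + 2xF' + 2(u''v' + u'v'') = 2F + 2xF' + 2x(u'v+uv') = 2F + 4xF'.
\]
Thus $F''' - 4xF' - 2F = 0$. Plugging $F(x)=\sum_n F_n x^n$ into this equation and identifying the coefficient of $x^{n}$ gives $(n+1)(n+2)(n+3)F_{n+3} - (4n+2)F_n = 0$, which is exactly~\eqref{eq:rec F}.

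The initial values come from evaluating $F, F', F''$ at $0$. We get $F_0 = \Ai(0)^{2} = A^{2} = 3^{-4/3}\Gamma(2/3)^{-2}$ directly, and $F_2 = F''(0)/2 = u'(0)v'(0) = (j\cdot j^{-1})\Ai'(0)^{2} = B^{2} = 3^{-2/3}\Gamma(1/3)^{-2}$ since the $2xF$ term in $F''$ vanishes at $0$. For $F_1 = F'(0) = (j + j^{-1})\Ai(0)\Ai'(0)$, the relation $j + j^{-1} = 2\cos(2\pi/3) = -1$ gives $F_1 = AB$; the slightly less routine step is to check that this equals $(2\sqrt{3}\pi)^{-1}$, which follows from Euler's reflection formula $\Gamma(1/3)\Gamma(2/3) = \pi/\sin(\pi/3) = 2\pi/\sqrt{3}$. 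Finally, positivity is automatic: since $F_0, F_1, F_2 > 0$ and the recurrence expresses $F_{n+3}$ as a strictly positive multiple of $F_n$, a trivial induction on the three congruence classes $n \bmod 3$ shows $F_n > 0$ for every $n$. The only non-mechanical step in the whole argument is the reflection-formula simplification of $F_1$; everything else is a direct computation.
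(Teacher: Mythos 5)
Your proof is correct and takes essentially the same route as the paper: both derive the third-order equation $F'''(x) - 4xF'(x) - 2F(x) = 0$ for the product (the paper by invoking the general closure property for products of solutions of LODEs with coefficients in $\mathbb{Q}(x)$, you by direct differentiation using $u''=xu$, $v''=xv$), then read off the recurrence by identifying coefficients and obtain the initial values from the Taylor data of $\Ai$ at the origin, with positivity following by induction from the recurrence.
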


\begin{IEEEproof}
  As a general fact, if two functions $w$~and~$y$ each satisfy a
  homogeneous LODE with coefficients in~$\mathbb{Q}(x)$, then
  their product $wy$ satisfies an equation of the same class that can be explicitly computed~{\cite[Sec.~6.4]{Stanley1999}}.
  The functions $\Ai(j^{\pm 1} x)$ satisfy the same
  differential equation~\eqref{eq:deq Ai} as~$\Ai(x)$
  itself. Applying the procedure mentioned above
  to two copies of that equation yields $F^{(3)}(x)
  - 4 xF'(x) - 2 F(x) = 0$.
  
  Similarly, when an analytic function~$y$ satisfies a homogeneous LODE
  over~$\mathbb{Q}(x)$, we can compute a recurrence relation with
  coefficients in~$\mathbb{Q}(n)$ on the coefficients~$y_n$ of its power
  series expansion.
  In the case of~$F$, we get~\eqref{eq:rec F}. Finally, we
  compute the initial values $F_1, F_2, F_3$ from the first few terms of the
  Taylor expansion of $\Ai(x)$:
  \begin{align*}
    F(x) &= (A-Bjx+O(x^3))(A-Bj^{-1}x+O(x^3))\\
         &= A^2-ABx+B^2x^2+O(x^3).
  \end{align*}
  It is then apparent from~\eqref{eq:rec F} that $F_n > 0$ for all $n \in
  \mathbb{N}$.
\end{IEEEproof}

Thus, the coefficients of $F(x)$ obey a two-term recurrence
whose coefficients do not vanish for $n \geq 0$. This allows one to
compute them in a numerically stable way (see Sec.~\ref{sec:evaluation of F}).

\section{The modified series $G$}
\label{sec:G}

Recall that $ G(x) = \Ai(x) \Ai(jx) \Ai(j^{- 1} x)$, and set
\begin{equation}
  \tilde{G}(x) = \widetilde{\Ai}(x) 
  \widetilde{\Ai}(jx)  \widetilde{\Ai}(j^{-
  1} x) = \frac{\exp(\frac{2}{3} x^{3 / 2})}{8 \pi^{3 /
  2} x^{3 / 4}} . \label{eq:asy G}
\end{equation}

\begin{proposition}
  \label{prop:rec G}
  The function $G$ is an entire function with power series
  expansion of the form $G(z) = \sum_{n = 0}^{\infty} G_n z^{3 n}$.
  The coefficient sequence $(G_n)_{n \in \mathbb{N}}$ is
  determined from its first terms
  \begin{align*}
    G_0 &= A^3={\frac{1}{9{\Gamma}(2/3)^3}}, &
    G_1 &= {\frac{A^3}{2}}-B^3={\frac{1}{18{\Gamma}(2/3)^3}}-{\frac{1}{3{\Gamma}(1/3)^3}}
  \end{align*}
  by the recurrence relation
  \begin{equation}
    \label{eq:rec G}
    (n+1)(3n+4)(3n+5)(n+2) G_{n + 2} - 10(n+1)^2  G_{n+1} + G_n = 0.
  \end{equation}
\end{proposition}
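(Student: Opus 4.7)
My plan is to imitate the strategy used in Proposition~\ref{prop:rec F}: first establish a linear ODE for $G$, then convert it to a recurrence on its Maclaurin coefficients, and finally compute the first two terms by direct expansion of the product. That $G$ is entire is immediate, being a product of three entire functions. Using $j^3 = 1$, we also have
\[
G(jx) = \Ai(jx)\, \Ai(j^2 x)\, \Ai(x) = G(x),
\]
so only powers of $x^3$ occur in the Maclaurin series of $G$, which yields the claimed form $G(z) = \sum_{n\ge 0} G_n z^{3n}$.

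Next, each of the three factors solves Airy's equation: since $j^{\pm 3} = 1$,
\[
\frac{d^2}{dx^2}\Ai(j^{\pm 1} x) = j^{\pm 2}\Ai''(j^{\pm 1}x) = j^{\pm 3} x\,\Ai(j^{\pm 1}x) = x\,\Ai(j^{\pm 1}x),
\]
so $G$ is a product $u_1 u_2 u_3$ of three solutions of $y''=xy$. The space of such products is $4$-dimensional (a basis being $\Ai^{k}\,\mathrm{Bi}^{3-k}$ for $0\le k\le 3$), so $G$ satisfies a LODE of order~$4$. To exhibit it, I would introduce the symmetric expressions $\sigma_2 = u_1' u_2' u_3 + u_1' u_2 u_3' + u_1 u_2' u_3'$ and $\sigma_3 = u_1' u_2' u_3'$; a direct calculation using $u_i''=xu_i$ yields $G'' = 3xG + 2\sigma_2$, $G^{(3)} = 3G + 7xG' + 6\sigma_3$, and $\sigma_3' = x\sigma_2$. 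Eliminating $\sigma_2$ and $\sigma_3$ produces
\[
G^{(4)} - 10 x G'' - 10 G' + 9 x^2 G = 0.
\]
Alternatively, the same operator can be obtained by running the standard product-closure algorithm on the order-3 ODE for $F$ found in Proposition~\ref{prop:rec F} and the Airy equation for $\Ai$.

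Substituting the Maclaurin series into this ODE and matching the coefficient of $x^{3n+2}$ (the only residue class modulo~$3$ on which all four terms contribute nontrivially) gives, after identifying $g_{3m}$ with $G_m$ and dividing the result by~$9$, exactly the stated three-term recurrence. For the initial values, I would expand $\Ai(x) = A - Bx + \tfrac{A}{6}x^3 + O(x^4)$ together with the analogous series at $jx$ and $j^{-1}x$ using $j^3=1$: the constant term of the product is $A^3$, while the coefficient of $x^3$ combines the three symmetric contributions $3 \cdot A^2 \cdot (A/6) = A^3/2$ with $(-B)(-Bj)(-Bj^{-1}) = -B^3$, giving $G_1 = A^3/2 - B^3$. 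I expect the main obstacle to be the careful derivation of the order-$4$ ODE, since the one-function symmetric-cube shortcut does not quite apply verbatim to three distinct factors; the index bookkeeping required to collapse the recurrence onto the subsequence $(G_n)$ must also be handled with some care.
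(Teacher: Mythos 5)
Your proposal is correct and follows essentially the same route as the paper: the symmetry $G(jx)=G(x)$ forces a series in $z^{3}$, the fourth-order equation $G^{(4)}-10xG''-10G'+9x^{2}G=0$ gives the three-term recurrence by extracting the coefficient of $x^{3n+2}$, and $G_0=A^3$, $G_1=A^3/2-B^3$ come from direct expansion of the product. The only difference is cosmetic \emph{--} you derive the ODE by explicit elimination of the symmetric functions $\sigma_2,\sigma_3$ (your identities $G''=3xG+2\sigma_2$, $G^{(3)}=3G+7xG'+6\sigma_3$, $\sigma_3'=x\sigma_2$ all check out), whereas the paper simply invokes the product-closure algorithm for solutions of LODEs, an alternative you also mention; the paper additionally records the (obvious) non-vanishing of the leading coefficient $(n+1)(n+2)(3n+4)(3n+5)$ for $n\ge 0$, which is what justifies the phrase ``determined by $G_0$, $G_1$''.
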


\begin{IEEEproof}
  First, observe that $G(jz) = G(z)$ and $G(
  \bar{z}) = \overline{G(z)}$, so that the Taylor
  expansion of~$G$ at the origin is a power series in~$z^3$ with
  real coefficients. The same routine reasoning as in the proof of
  Prop.~\ref{prop:rec F} yields the LODE
  \[ G^{(4)}(x) - 10 xG''(x) - 10 G' (x) + 9 x^2 G(x) = 0, \]
  and from there the recurrence~\eqref{eq:rec G}. The coefficients
  of~\eqref{eq:rec G} do not vanish for $n \geq 0$, so that the
  sequence~$G_n$ is indeed determined by $G_0$, $G_1$, and \eqref{eq:rec G}.
\end{IEEEproof}

Nonzero solutions of~\eqref{eq:rec G} decrease roughly as $n!^{- 2}$ for
large~$n$.
Setting~$c_n = n!^2 G_n$ yields the
``normalized'' recurrence
\begin{equation}
  \frac{(3 n + 4) (3 n + 5)}{(n + 1) 
 (n + 2)} c_{n + 2} - 10 c_{n + 1} + c_n = 0. \label{eq:norm rec
  G}
\end{equation}
Letting $n$~go to infinity in the coefficients of~\eqref{eq:norm rec G}, we
get a limit recurrence with constant coefficients whose
characteristic polynomial $9 \alpha^2 - 10 \alpha + 1$ has two roots of
distinct absolute value, namely $\frac{1}{9}$~and~$1$. By the Perron--Kreuser
theorem~{\cite[Theorem~B.10]{Wimp1984}}, it follows that any solution~$(
v_n)_{n \in \mathbb{N}}$ of~\eqref{eq:rec G} satisfies
$v_{n+1}/v_n \sim \alpha n^{-2}$
with either $\alpha = 1$ or $\alpha = \frac{1}{9}$. Solutions $(v_n
)$ such that $v_{n + 1} / v_n \sim \frac{1}{9n^2}$ are called
{\emph{minimal}} and form a linear subspace of dimension~$1$ of the solutions of~\eqref{eq:rec G}.

We shall prove that~$(G_n)$ actually is such a minimal solution of~(\ref{eq:rec
G}). But our analysis uses a bit more than the rough estimate $G_n \approx n!^{- 2} 9^{-n}$. Prop.~\ref{prop:bound on Gn} below
provides a more precise estimate which implies the minimality.
Before turning to it, we recall a standard bound on the tails of incomplete Gaussian integrals~{\cite[Eq.~7.12.1]{DLMF}} and state a second technical lemma.

\begin{lemma}
  \label{lem:erf}The complementary error function
  $\operatorname{erfc} x = 1 - \operatorname{erf} x = \frac{2}{\sqrt{\pi}} 
     \int_x^{\infty} e^{- t^2} \mathd t $
  satisfies
  $ 0 \leq \operatorname{erfc} x \leq \frac{1}{\sqrt{\pi} x} e^{- x^2} $
  for $x > 0$.
\end{lemma}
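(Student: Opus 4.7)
The lower bound is immediate from the definition, since the integrand $e^{-t^2}$ is strictly positive on $(x,\infty)$, so the integral defining $\operatorname{erfc} x$ is nonnegative. The only real content is the upper bound, and there is a completely standard one-line argument that I would use.

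For $t \geq x > 0$ one has $1 \leq t/x$, hence
\[
  \int_x^{\infty} e^{-t^2}\,\mathd t
  \;\leq\; \frac{1}{x}\int_x^{\infty} t\,e^{-t^2}\,\mathd t
  \;=\; \frac{1}{x}\left[-\tfrac{1}{2}e^{-t^2}\right]_x^{\infty}
  \;=\; \frac{e^{-x^2}}{2x}.
\]
Multiplying by $2/\sqrt{\pi}$ gives the claim. (An essentially equivalent derivation is to integrate by parts with $u=-1/(2t)$ and $\mathd v = -2t\,e^{-t^2}\,\mathd t$; one picks up the same boundary term $\frac{1}{2x}e^{-x^2}$ and a nonpositive remainder $-\int_x^\infty \frac{1}{2t^2}e^{-t^2}\,\mathd t$, which is discarded.) The hypothesis $x > 0$ is needed for the factor $1/x$ to make sense and for the inequality $t/x \geq 1$ on the interval of integration.

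There is essentially no obstacle here: the estimate is a one-liner, and no delicate bookkeeping is required. Since the statement is classical and only quoted in the paper as a tool for the forthcoming analysis of~$G$, I would keep the proof to the two or three lines above rather than invoking the general asymptotic expansion of $\operatorname{erfc}$ from DLMF.
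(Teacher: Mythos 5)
Your argument is correct: the lower bound is trivial from positivity of the integrand, and the estimate $e^{-t^2}\leq (t/x)\,e^{-t^2}$ for $t\geq x>0$ followed by the exact integration of $t\,e^{-t^2}$ gives exactly the stated upper bound after multiplying by $2/\sqrt{\pi}$. The paper itself offers no proof at all\emdash{}it simply quotes the bound as standard with a reference to the DLMF\emdash{}so your two-line derivation is precisely the elementary argument that backs that citation, and nothing more is needed.
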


\begin{lemma} \label{lem:int} \mightbeomitted
  \label{lem:I2-I1}The expression
  \[ I(r) = r^{9 / 8}  \int_{r^{- 5 / 8}}^{\pi / 3}
     e^{\frac{2}{3} r^{3 / 2} (- 1 + \cos \frac{3 \theta}{2})}
     \mathd \theta \]
  satisfies $0 \leq I(r) \leq 0.51$ for all $r
  \geq 10$.
\end{lemma}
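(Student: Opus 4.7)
The nonnegativity $I(r) \ge 0$ is immediate from the positivity of the integrand. For the upper bound, my plan is to exploit the Gaussian-like decay of the integrand near $\theta = 0$ and reduce to a scaled complementary error function that Lemma~\ref{lem:erf} already controls.

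First I would establish the quadratic lower bound $1 - \cos u \ge (4/\pi^2)\,u^2$ on $[0, \pi/2]$. The natural route is to show that $u \mapsto (1-\cos u)/u^2$ is nonincreasing on $(0, \pi/2]$: differentiating reduces this to the sign of $g(u) = u\sin u - 2(1-\cos u)$, which satisfies $g(0) = g'(0) = 0$ and $g''(u) = -u\sin u \le 0$, hence $g \le 0$. Evaluating the monotone ratio at the right endpoint gives the constant $4/\pi^2$. Setting $u = 3\theta/2$ yields $1 - \cos(3\theta/2) \ge 9\theta^2/\pi^2$ for $\theta \in [0, \pi/3]$, so the integrand is bounded above by $\exp(-6\, r^{3/2}\, \theta^2 / \pi^2)$.

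Next I would enlarge the integration interval to $[r^{-5/8}, \infty)$ and substitute $v = \theta\sqrt{6}\, r^{3/4}/\pi$, which turns the upper bound into
\[
  I(r) \le \frac{\pi^{3/2}\, r^{3/8}}{2\sqrt{6}}\; \operatorname{erfc}\!\left(\tfrac{\sqrt{6}}{\pi}\, r^{1/8}\right).
\]
Applying Lemma~\ref{lem:erf} with $x = \sqrt{6}\, r^{1/8}/\pi$ and simplifying gives the single-variable estimate $I(r) \le (\pi^2/12)\, r^{1/4}\, e^{-6 r^{1/4}/\pi^2}$. A calculus check shows that $t \mapsto (\pi^2/12)\, t\, e^{-6t/\pi^2}$ is maximized at $t = \pi^2/6$ with value $\pi^4/(72\,e) \approx 0.4976$, hence $I(r) \le 0.4976 < 0.51$ for every $r > 0$, which settles the claim (the hypothesis $r \ge 10$ is actually not needed for this constant and only reflects the range in which the lemma will be invoked later).

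The main obstacle I anticipate is the tightness of the quadratic estimate on $1 - \cos u$: the naive Jordan bound $1 - \cos u \ge 2u^2/\pi^2$ would produce a final estimate of the form $(\pi^2/6)\, r^{1/4} e^{-3 r^{1/4}/\pi^2}$, whose maximum lies well above $1$. Only the factor-of-two improvement coming from the monotonicity of $(1-\cos u)/u^2$ brings the bound under $0.51$, so getting the \emph{sharp} constant in the lower bound for $1 - \cos(3\theta/2)$ is what the whole argument really hinges on.
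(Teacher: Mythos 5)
Your proof is correct and follows essentially the same route as the paper: a quadratic lower bound on $1-\cos\frac{3\theta}{2}$, extension of the integration range, reduction to $\operatorname{erfc}$ via a Gaussian change of variable, then Lemma~\ref{lem:erf} and elementary one-variable calculus. The only real difference is a refinement of constants: the paper uses the slightly weaker bound $\cos\frac{3\theta}{2}\le 1-\frac{9}{10}\theta^2$ and then needs the hypothesis $r\ge 10$ together with the monotonicity of $\frac{5}{6}r^{1/4}e^{-\frac{3}{5}r^{1/4}}$ (whose global maximum is just above $0.51$), whereas your sharp constant $4/\pi^2$ lets you bound $I(r)$ by the global maximum $\pi^4/(72e)\approx 0.498$ without invoking $r\ge 10$ at all.
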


\begin{IEEEproof}
  We first use the inequality $\cos (\frac{3}{2} \theta) \leq 1 -
  \frac{9}{10} \theta^2$ (valid for $0 \leq \theta \leq \pi / 3$)
  followed by the change of variable $\varphi = \sqrt{3/5}\,r^{3 / 4}\,\theta$ to get
  \[ I(r) \leq r^{9 / 8}  \int_{r^{- 5 / 8}}^{\pi / 3} e^{-
     \frac{3}{5} r^{3 / 2} \theta^2} \mathd \theta = \frac{\sqrt{5}}{\sqrt{3}}\, r^{3 / 8} 
     \int_{a(r)}^{b(r)} e^{-
     \varphi^2} \mathd \varphi, \]
  where $a(r) = \sqrt{3/5}\,r^{1 / 8}$ and $b(r) =
  \frac{\pi}{ \sqrt{15}} r^{3 / 4}$. Now we have
  \[ \int_{a(r)}^{b(r)} e^{-
     \varphi^2} \mathd \varphi \leq \int_{a(r
    )}^{\infty} e^{-\varphi^2} \mathd \varphi = \frac{\sqrt{\pi}}{2}
      \operatorname{erfc}\left(\frac{\sqrt{3}
     r^{1 / 8}}{\sqrt{5}}\right), \]
  and, by Lemma~\ref{lem:erf},
  $ I(r) \leq \frac{5}{6} r^{1 / 4} e^{- \frac{3}{5} r^{ 1 / 4}}$.
  When $r \geq 10$, the last bound is decreasing and hence less than~$0.51$.
\end{IEEEproof}

The following proposition is the main result of this section and the starting point of much of the error analysis that follows.
Though somewhat technical, the proof is mostly routine.

\begin{proposition}
  \label{prop:bound on Gn}The sequence~$(G_n)$ satisfies
  \[ G_n \sim \gamma_n = \frac{1}{4 \sqrt{3} \pi 9^n n!^2}, \hspace{1em} n
     \rightarrow \infty, \]
  with relative error
  $ \abs{G_n / \gamma_n - 1} \leq 2.2 n^{- 1 / 4} $ \hspace{1em}($n \geq 1$).
\end{proposition}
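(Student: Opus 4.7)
The plan is to extract $G_n$ via Cauchy's integral formula and evaluate the resulting contour integral by the saddle-point method, using Lemma~\ref{lem:asympt approx Ai} to replace $G$ by $\tilde G$ and Lemma~\ref{lem:int} to control the off-saddle contribution. For any $r>0$, and using the three-fold symmetry $G(jz)=G(z)$ noted in Prop.~\ref{prop:rec G},
\[
   G_n = \frac{3}{2\pi\,r^{3n}}\int_{-\pi/3}^{\pi/3} G(re^{i\theta})\,e^{-3in\theta}\,\mathd\theta.
\]
I would take $r=(3n)^{2/3}$, so that $r^{3/2}=3n$ and $\theta=0$ is a saddle of the phase; Taylor expansion around the origin gives
\[
   \tfrac23 r^{3/2}e^{3i\theta/2}-3in\theta = 2n-\tfrac{3i\theta}{4}-\tfrac{9n}{4}\theta^2-\tfrac{9in}{8}\theta^3+O(n\theta^4).
\]

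Next I would split the integral at the scale $\theta_0=r^{-5/8}=(3n)^{-5/12}$ appearing in Lemma~\ref{lem:int}. On the near-saddle region $|\theta|\le\theta_0$, applying Lemma~\ref{lem:asympt approx Ai} to each of the three factors of $G$ yields $|G/\tilde G-1|=O(r^{-3/2})=O(n^{-1})$ (the three $\eta_1$ factors remain bounded because $\theta$ and $\theta\pm 2\pi/3$ all stay well inside $(-\pi,\pi)$); the cubic Taylor remainder contributes a multiplicative factor $1+O(n\theta_0^3)=1+O(n^{-1/4})$; and extending the resulting Gaussian integral from $[-\theta_0,\theta_0]$ to~$\R$ introduces only a super-polynomially small correction, the tails being of size $\exp(-\Omega(n^{1/6}))$. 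The Gaussian itself evaluates to $2\sqrt\pi/(3\sqrt n)$, and after collecting the $r$-prefactors and applying Stirling's formula one obtains $\gamma_n(1+O(n^{-1/4}))$ for the near-saddle contribution.

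For the tail $\theta_0\le|\theta|\le\pi/3$, I would bound $|G(re^{i\theta})|$ by a constant multiple of $|\tilde G(re^{i\theta})|$ via Lemma~\ref{lem:asympt approx Ai} and factor $\exp(\tfrac23 r^{3/2})$ out of $|\tilde G|$; the remaining integral is, up to polynomial $r$-prefactors, exactly the quantity $I(r)$ controlled in Lemma~\ref{lem:int}, which shows that this tail contributes a relative error also of order $n^{-1/4}$. The main obstacle is not any single estimate but the bookkeeping: the four sources of error (the replacement $G \to \tilde G$, the cubic Taylor remainder, the truncation of the Gaussian, and the tail handled by Lemma~\ref{lem:int}) must each be made explicit and shown to sum to at most~$2.2\,n^{-1/4}$. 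Since Lemma~\ref{lem:int} assumes $r\ge 10$, i.e.\ $n\ge\lceil 10^{3/2}/3\rceil = 11$, the finitely many smaller values of~$n$ would be handled by direct numerical verification using the recurrence~\eqref{eq:rec G} and the explicit initial values given in Prop.~\ref{prop:rec G}.
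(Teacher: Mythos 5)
Your outline follows the same route as the paper's proof: Cauchy's formula restricted to $[-\pi/3,\pi/3]$ by the three-fold symmetry, a saddle radius with $r^{3/2}\approx 3n$, a split at $\theta_0=r^{-5/8}$, Lemma~\ref{lem:asympt approx Ai} on the central range, Lemma~\ref{lem:int} for the off-saddle range, Stirling, and a finite numerical check for small~$n$. There is, however, one genuine gap, in the tail estimate. For $\theta_0\le|\theta|\le\pi/3$ you claim $|G(re^{i\theta})|\le C\,|\tilde{G}(re^{i\theta})|$ ``via Lemma~\ref{lem:asympt approx Ai}''. That lemma alone cannot deliver this: $G(re^{i\theta})$ contains the factor $\Ai(jre^{i\theta})$, whose argument $\theta+\frac{2\pi}{3}$ tends to $\pi$ as $\theta\to\pi/3$, and $\eta_1(\varphi)=\frac{5}{48}(\cos\frac{\varphi}{2})^{-7/2}$ blows up as $\varphi\to\pi$, so the lemma gives no usable bound on that factor near the negative real axis (where $\widetilde{\Ai}$ ceases to be a uniform approximation). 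The paper bridges exactly this point with the connection formula $\Ai(z)+j\Ai(jz)+j^{-1}\Ai(j^{-1}z)=0$, rewriting the problematic factor in terms of $\Ai(z)$ and $\Ai(j^{-1}z)$, both controlled by Lemma~\ref{lem:asympt approx Ai} on $[0,\pi/3]$ (symmetry handles negative $\theta$); this yields $|G(z)|\le 2.1\,|\tilde{G}(z)|$ for $|z|\ge 10$, after which Lemma~\ref{lem:int} applies as you intend. Without this step, or some substitute bound for $\Ai$ near the negative axis, your tail argument fails as written. Your near-saddle use of the lemma is fine, since there $|\theta\pm\frac{2\pi}{3}|$ stays bounded away from~$\pi$, as you note.

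Two minor points. The paper takes $r=(3n+3/4)^{2/3}$, the exact saddle of $\tilde{G}(z)z^{-3n}$, so no linear term survives in the exponent; with your choice $r=(3n)^{2/3}$ the displayed expansion is not literally correct --- the $-\frac{3i\theta}{4}$ term comes from the prefactor $z^{-3/4}$, not from $\frac23 r^{3/2}e^{3i\theta/2}-3in\theta$ --- and the resulting $O(\theta_0)$ contribution, while harmless, is one more item in the error budget. Relatedly, the explicit constant $2.2$ and the threshold for the numerical check are left entirely symbolic in your plan; the paper's proof spends most of its effort there (explicit $\eta_2$, the bound $|v(\theta)|\le 0.44\,r^{-3/8}$, Robbins' form of Stirling), and that bookkeeping is what turns the outline into the stated quantitative result.
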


\begin{notation} \label{not:E}
  We write
  $ E(\varepsilon_1, \ldots, \varepsilon_n) = \sum_{\varnothing
     \neq I \subset \left\llbracket 1, n \right\rrbracket} \prod_{i \in I}
     \varepsilon_i$,
  so that, if $\abs{\theta_i} \leq \varepsilon_i$ for all $i$, then $\prod_{i=1}^n (1+\theta_i) = 1+\theta$ with $\abs{\theta} \leq E(\varepsilon_1, \ldots,
  \varepsilon_n)$.
  Note that we obviously have $E(\alpha\,\varepsilon_1, \ldots, \alpha\,\varepsilon_n) \le \alpha\,E(\varepsilon_1, \ldots, \varepsilon_n)$ for $0 \le \alpha \le 1$.
\end{notation}

\begin{IEEEproof}
  The proof is a standard application of the {\emph{saddle point method}}
  \cite[{\S}VIII.3]{FlajoletSedgewick2009}, which we work out in
  some detail in order to get an explicit error bound.

  We fix $n > 10$. We shall write $z = r e^{i\theta}$ in the following.
  The method prescribes to choose $r = (3 n + 3 / 4)^{2 / 3}$, so that
  \begin{equation}
    \frac{\tilde{G}(z)}{z^{3 n}} = \frac{\tilde{G}(r)}{r^{3 n}} e^{O((z - r)^2)} \label{eq:saddle cond}
  \end{equation}
  (i.e., $\frac{\mathd}{\mathd z} \log (\tilde{G}(z) z^{- 3
  n})_{\left|z = r \right.} = 0$).

  Now, guided by~\eqref{eq:saddle cond} and $G(z) \approx \tilde{G}(z)$, we write
  \[ G_n = \frac{1}{2 \pi i}  \oint \frac{G(z)}{z^{3 n + 1}}
     \mathd z = \frac{3}{2 \pi}  \frac{\tilde{G}(r)}{r^{3 n}}
     I_1, \quad
     I_1 = \int_{- \pi / 3}^{\pi / 3} \frac{G(re^{i \theta})}{\tilde{G}(r)} e^{- 3 ni \theta} \mathd \theta . \]
  Most of the weight of $I_1$ is concentrated around $0$. We set
  \begin{equation}
    \theta_0 = r^{- 5 / 8} =(3 n + 3 / 4)^{- 5 / 12} .
    \label{eq:theta0}
  \end{equation}
  Since $n > 10$, we remark that $r \geq 10$ and $0 < \theta_0 < 1 / 4 < \pi / 3$.
  We further let
  \[
  I_2 = \int_{-{\theta}_0}^{{\theta}_0}{\frac{G(re^{i{\theta}})}{\tilde{G}(r)}}e^{-3ni{\theta}}{\mathd}{\theta}.
  \]

  We first bound the error between $I_1$ and $I_2$. When $\theta \in [0,\pi / 3]$,
  using Lemma~1 and the connection formula~{\cite[Eq. 9.2.12]{DLMF}}
  \[ \Ai(z) + j \Ai(jz) + j^{- 1}
     \Ai(j^{- 1} z) = 0, \]
  we see that {$\abs{G(z) / \tilde{G}(z)}$} is
  bounded by
  \begin{align*}
    \left|\tfrac{\Ai(z)}{\widetilde{\Ai}(z)}\right| &
    \left|\tfrac{\Ai(j^{-1}z)}{\widetilde{\Ai}(j^{-1}z)}\right|
    \left(
        \left|\tfrac{\widetilde{\Ai}(z)}{\widetilde{\Ai}(jz)}\right|
        \left|\tfrac{\Ai(z)}{\widetilde{\Ai}(z)}\right|
    +
    \left|\tfrac{\widetilde{\Ai}(j^{-1}z)}{\widetilde{\Ai}(jz)}\right|
    \left|\tfrac{\Ai(j^{-1}z)}{\widetilde{\Ai}(j^{-1}z)}\right|\right)\\
    \leq&
    (1+{\eta}_1({\theta})r^{-3/2})
    (1+{\eta}_1({\theta}-{\tfrac{2{\pi}}{3}})r^{-3/2})\\
    &
    \left(e^{-{\frac{4}{3}}\cos{\frac{3{\theta}}{2}}}(1+{\eta}_1({\theta})r^{-{\frac{3}{2}}})+1+{\eta}_1({\theta}-{\tfrac{2{\pi}}{3}})r^{-{\frac{3}{2}}}\right).
  \end{align*}
  It follows that $\abs{G(z)} \leq 2.1 | \tilde{G}(z) |$ when $\abs{z} \geq 10$.
  By symmetry, this inequality holds for $- \pi / 3 \leq \theta
  \leq 0$ too.
  Finally, we have
  \begin{align*}
    \left|I_1-I_2\right|
    &\leq 2\int_{{\theta}_0}^{{\pi}/3}{\frac{\abs{G(re^{i{\theta}})}}{\tilde{G}(r)}}{\mathd}{\theta}
    \leq 4.2\int_{{\theta}_0}^{{\pi}/3}{\frac{|\tilde{G}(re^{i{\theta}})|}{\tilde{G}(r)}}{\mathd}{\theta} \\
    &= 4.2\int_{{\theta}_0}^{{\pi}/3}e^{{\frac{2}{3}}r^{3/2}(-1+\cos {\frac{3{\theta}}{2}})}{\mathd}{\theta}
  \end{align*}
  and hence, using Lemma~\ref{lem:I2-I1},
  \begin{equation}
    \abs{I_1 - I_2} \leq 2.15 r^{- 9 / 8} \label{eq:bound B}
  \end{equation}

  We now have to estimate $I_2$. We write
  \[
  I_2 = \int_{-\theta_0}^{{\theta}_0} \frac{G(re^{i{\theta}})}{\tilde{G}(re^{i\theta})} \cdot \frac{\tilde{G}(re^{i{\theta}})}{\tilde{G}(r)} e^{-3ni{\theta}}{\mathd}{\theta}.
  \]
  On the one hand, Lemma~\ref{lem:asympt approx Ai} gives
  $  G(z) = \tilde{G}(z)(1+\delta(\theta))$ with
  $ |\delta(\theta)| \le \eta_2(\theta)r^{-3/2}$ where $\eta_2(\theta) = E(\eta_1(\theta),
  \eta_1(\theta + \frac{2 \pi}{3}), \eta_1(\theta - \frac{2 \pi}{3}))$.
  Note that this bound increases with~$\abs{\theta}$.

  On the other hand, thanks to~\eqref{eq:saddle cond}, we can write
  \begin{align*}
    \tilde{G}(re^{i{\theta}})e^{-3ni{\theta}}
    = \tilde{G}(r)e^{{\frac{2}{3}}r^{3/2}(\exp({\frac{3}{2}}i{\theta})-1-{\frac{3}{2}}i{\theta})}
    = \tilde{G}(r)e^{(-{\frac{3}{4}}{\theta}^2+u({\theta}))r^{3/2}}
  \end{align*}
  where
  \begin{equation}
    \abs{u(\theta)} = \tfrac{2}{3}\left|e^{\frac{3}{2} i \theta} -
    1 - \tfrac{3}{2} i \theta + \tfrac{9}{8} \theta^2\right| \leq
    \tfrac{3}{8} \abs{\theta}^3 . \label{eq:bound u}
  \end{equation}
  Let
  \[ v(\theta) = \frac{G(re^{i \theta}) e^{- 3 in
     \theta}}{\tilde{G}(r) \exp(- \frac{3}{4} r^{3 / 2}
     \theta^2)} - 1. \]
  Since $\theta_0^3 r^{3 / 2} = r^{- 3 / 8}$ by the
  choice~\eqref{eq:theta0}, using~\eqref{eq:bound u} and the
  inequality $\abs{e^z-1} \leq \abs{z} e^{\abs{z}}$, we
  have, for $\abs{\theta} \leq \theta_0$ and $r \ge 10$:
  \[ |v(\theta)| \leq \frac{3}{8}
     r^{- 3 / 8} e^{\frac{3}{8} r^{- 3 / 8}} \leq
     0.44 r^{- 3 / 8}. \]
  To sum up, we can now rewrite $I_2$ as
  \[
  I_2 = \int_{-\theta_0}^{{\theta}_0} (1+\delta(\theta))(1+v(\theta))\exp\left(-{\frac{3}{4}}r^{3/2}{\theta}^2\right){\mathd}{\theta}.
  \]

  Now, define
  \begin{align*}
    I_3 &= \int_{-{\theta}_0}^{{\theta}_0}\exp\left(-{\frac{3}{4}}r^{3/2}{\theta}^2\right){\mathd}{\theta},&
    I_4 &= {\frac{2\sqrt{{\pi}}}{\sqrt{3}}}r^{-3/4}.
  \end{align*}
  When $\theta \in [-\theta_0, \theta_0]$, we have
  \[
    \left|(1+\delta(\theta))(1+v(\theta))-1\right| \leq
    E({\eta}_2({\theta}_0)r^{-3/2},0.44r^{-3/8}) \leq 0.95r^{-3/8},
  \]
  so we get
  \begin{equation}
    \abs{I_2 - I_3} 
    \leq 0.95 r^{- 3 / 8} I_3 . \label{eq:bound C}
  \end{equation}
  Finally, $I_3$ is an incomplete Gaussian integral:
  \[ I_3 = \frac{2 \sqrt{\pi}}{\sqrt{3} r^{3 / 4}} \operatorname{erf}(
     \tfrac{1}{2} \sqrt{3} r^{3 / 4} \theta_0) = I_4 \operatorname{erf}(
     \tfrac{1}{2} \sqrt{3} r^{1 / 8}) . \]
  Lemma~\ref{lem:erf} yields
  \begin{equation}
    \abs{I_3 / I_4 - 1} \leq \frac{2 r^{- 1 / 8} e^{-
    \frac{3}{4} r^{1 / 4}}}{\sqrt{3 \pi}} \leq 0.66 r^{- 1 / 8} e^{-
    \frac{3}{4} r^{1 / 4}} . \label{eq:bound D}
  \end{equation}
  Putting together (\ref{eq:bound B}, \ref{eq:bound C},
  \ref{eq:bound D}), we obtain $I_1 = I_4(1+\epsilon)$ with
  \begin{align*}
    \left|\epsilon\right|
    &\leq \frac{\left|I_1-I_2\right|}{I_4} + \left|{\frac{I_2I_3}{I_3I_4}}-1\right|\\
    &\leq 1.06r^{-3/8} + E(0.95r^{-3/8}, 0.66r^{-1/8}e^{-{\frac{3}{4}}r^{1/4}})\\
    &\leq 2.45r^{-3/8} \leq 1.9 n^{-1/4}.
  \end{align*}

  Now, we can write $G_n = \frac{3}{2\pi}\cdot\frac{\tilde{G}(r)}{r^{3n}}\,I_4(1+\epsilon) = \gamma_n\,H\,(1+\epsilon)$ with
  \[
    H := {\frac{\sqrt{3}}{\sqrt{\pi} r^{3/4} {\gamma}_n}} {\frac{\tilde{G}(r)}{r^{3n}}}\\
    = {\frac{n!^2e^{2n}}{n^{2n}2{\pi}n}}{\frac{\sqrt{e}}{(1+{\frac{1}{4n}})^{2n+1}}}.
  \]
  Stirling's formula in the form~\cite{Robbins1955}
  $\left| \frac{n!e^n}{n^n  \sqrt{2 \pi n}} - 1 \right| \leq s(n) =  e^{\frac1{12n}}-1$,
  combined with the bound
  $ \left| \frac{\sqrt{e}}{(1 + 1 /(4 n))^{2 n}}
     - 1 \right| \leq \frac{1}{16 n} $,
  yields
  \begin{equation}
    \abs{H - 1} \leq E\left(s(n), s(n),
    \tfrac{1}{16 n}, \tfrac{1}{4 n}\right) \leq \frac{1}{2 n},
    \qquad n \geq 11.
    \label{eq:bound A}
  \end{equation}
  It follows that
  \[
    \left|{\frac{G_n}{{\gamma}_n}}-1\right|
    \leq E(0.5n^{-1},1.9n^{-1/4})\\
    \leq 2.2n^{-1/4},
    \hspace{2em} n \geq 11,
  \]
  One easily checks that this bound is valid for
  $1 \leq n \leq 10$.
\end{IEEEproof}

Note that the above bound is not the best we can get by this method.
Indeed, by choosing the exponent of~$r$ in~\eqref{eq:theta0} closer to $-3/4$, we obtain~$|G_n/\gamma_n-1|=O(n^{-1/2+\epsilon})$ for any $\epsilon>0$. This comes at the price of a larger constant factor and thus more terms to check separately.

A first consequence of Prop.~\ref{prop:bound on Gn} is that the series~$G(x)$ has nonnegative coefficients, as stated below.
We also deduce several other technical results that will be used to bound various error terms.
Recall that $c_n = n!^2 G_n$, and let $\tau = 3/20$.

\begin{corollary}
    \label{cor:positivity}  The sequence $(c_n)$ satisfies $0 \le c_{n+1}/c_n \le \tau$ for all $n \geq 0$. Accordingly, we have $0 < G_{n+1}/G_n \le \tau (n+1)^{-2}$.
    In particular, the~$G_n$ are positive.
\end{corollary}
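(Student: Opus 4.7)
The plan is to deduce both parts of the Corollary from Proposition~\ref{prop:bound on Gn}, supplemented by a finite numerical check that handles the small values of~$n$.

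First, the two inequalities are in fact equivalent once positivity is known. Indeed, $c_n = n!^2\,G_n$ gives $c_{n+1}/c_n = (n+1)^2\,G_{n+1}/G_n$, so the upper bound $c_{n+1}/c_n \le \tau$ is the same as $G_{n+1}/G_n \le \tau(n+1)^{-2}$; and starting from $G_0 > 0$, the strict inequality $c_{n+1}/c_n > 0$ propagates positivity inductively to every $G_n$. So it is enough to prove $0 < c_{n+1}/c_n \le \tau$ for every $n \ge 0$.

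For the large-$n$ regime, use Proposition~\ref{prop:bound on Gn} directly. Writing $\beta_n = (4\sqrt{3}\,\pi\,9^n)^{-1}$ and $\delta_n = 2.2\,n^{-1/4}$, the proposition yields $c_n = \beta_n\,(1+\varepsilon_n)$ with $|\varepsilon_n| \le \delta_n$ for $n \ge 1$. Consequently
\[
  \frac{c_{n+1}}{c_n} \;=\; \frac{1}{9}\cdot\frac{1+\varepsilon_{n+1}}{1+\varepsilon_n},
\]
which is strictly positive as soon as $\delta_n < 1$ (so for $n \ge 24$), and is bounded above by $\tau = 3/20$ as soon as $(1+\delta_{n+1})/(1-\delta_n) \le 27/20$, i.e.\ $20\,\delta_{n+1} + 27\,\delta_n \le 7$. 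Solving this inequality in $n$ produces an explicit threshold $N_0$ above which both parts of the Corollary hold automatically.

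The finitely many remaining indices $n < N_0$ are then checked by a direct numerical evaluation of $G_n$. The forward recurrence~\eqref{eq:rec G} cannot be used at a fixed working precision, because of the very factor-$9$ error amplification analysed later in the paper; one instead falls back on the Miller-type backward procedure developed in the subsequent sections, which delivers rigorous enclosures of $G_0,\dots,G_{N_0}$ from which the claimed bounds are read off. The main obstacle is the slow $n^{-1/4}$ decay of the error in Proposition~\ref{prop:bound on Gn}, which pushes $N_0$ to several $10^4$; it could be reduced substantially by using the sharper $O(n^{-1/2+\varepsilon})$ estimate alluded to in the remark after that proposition, at the cost of reworking its saddle-point bound with a tighter exponent in~\eqref{eq:theta0}.
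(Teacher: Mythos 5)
Your large-$n$ argument is essentially the paper's own: Proposition~\ref{prop:bound on Gn} gives $c_n = (1+\theta_n)/(4\sqrt{3}\pi 9^n)$ with $|\theta_n|\le 2.2\,n^{-1/4}$, and once this bound is small enough (the paper takes the threshold $n_0=47610$, where $|\theta_n|\le 0.148936$) one gets $c_n>0$ and $c_{n+1}/c_n \le \tfrac19\,\tfrac{1+|\theta_{n+1}|}{1-|\theta_n|}\le \tau$; your condition $20\,\delta_{n+1}+27\,\delta_n\le 7$ is the same computation. Where you diverge is the treatment of the finitely many indices below the threshold, and that is where the gap lies. The paper simply computes the terms up to $n_0$ with the forward recurrence~\eqref{eq:rec G} in interval arithmetic. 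Your claim that the forward recurrence ``cannot be used at a fixed working precision'' is mistaken: the factor-$9$ amplification only means the relative width of the enclosures grows like $9^{n}2^{-\twork}$, so choosing once a working precision of order $n_0\log_2 9\approx 1.6\cdot 10^{5}$ bits (a function of the fixed, known bound $n_0$) keeps all $\approx 4.8\cdot 10^{4}$ enclosures tight enough to read off the sign and ratio inequalities; this is a routine and entirely rigorous finite verification.

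More seriously, the substitute you propose is circular. The rigorous error bounds for the Miller backward procedure of the later sections rest on Theorem~\ref{thm:MvdS} via Corollary~\ref{cor:MvdS to our case} and Corollary~\ref{cor:ineqs for MvdS}, and those are proved using Corollary~\ref{cor:positivity} itself (decreasingness of $(c_n)$, the bound $c_n\le\tau^n c_0$, the lower bound on $\det U_i$, the estimate of $|y_{R2}/y_{R1}|$). So the backward procedure, as certified in the paper, cannot be invoked to establish the present statement for $n<N_0$ without assuming what is to be proved. Either revert to the forward interval computation, or you would have to re-derive error bounds for the backward recurrence directly from Proposition~\ref{prop:bound on Gn}, which your proposal does not sketch. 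Apart from this, your reduction to $0<c_{n+1}/c_n\le\tau$ and the remark about sharpening the exponent in~\eqref{eq:theta0} to lower the threshold are correct.
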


\begin{IEEEproof}
 Prop.~\ref{prop:bound on Gn} implies that $c_n = (1+\theta_n)/(4\sqrt{3}\pi 9^n)$ with $|\theta_n| \le 2.4 n^{-1/4}$. When $n \ge n_0 = 47610$, we have $|\theta_n| \le 0.148936$. This implies $c_n \ge 0$ and
  $c_{n+1}/c_n \le (1/9)(1+\theta_{n+1})/(1+\theta_n) \le \tau$.
  The inequalities up to $n=n_0$ are checked by computing the corresponding terms with interval arithmetic.
\end{IEEEproof}

\begin{corollary} \mightbeomitted
  \label{cor:upper bound on Gn}  For any $n \ge 1$, it holds that $G_n \le (e/(3n))^{2n}$.
\end{corollary}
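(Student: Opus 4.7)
The plan is to bound $G_n$ from above using Proposition~\ref{prop:bound on Gn} and then compare the resulting expression to $(e/(3n))^{2n}$ via Stirling's formula. Specifically, Proposition~\ref{prop:bound on Gn} together with Corollary~\ref{cor:positivity} (ensuring $G_n>0$) gives
\[
  G_n \le \gamma_n\,(1 + 2.2\,n^{-1/4}) = \frac{1 + 2.2\,n^{-1/4}}{4\sqrt{3}\,\pi\,9^n\,n!^2}
  \qquad (n\ge1).
\]
Since $(e/(3n))^{2n} = e^{2n}/(9^n n^{2n})$, cancelling the common factor $9^n$ reduces the claim to the inequality
\[
  \frac{(1 + 2.2\,n^{-1/4})\,n^{2n}}{4\sqrt{3}\,\pi\,n!^2} \le e^{2n}.
\]

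First I would invoke Stirling's lower bound $n!\ge\sqrt{2\pi n}\,(n/e)^n$, which rearranges to $n^{2n}/n!^2 \le e^{2n}/(2\pi n)$. Substituting this estimate, it suffices to verify
\[
  \frac{1 + 2.2\,n^{-1/4}}{8\sqrt{3}\,\pi^2\,n} \le 1
  \qquad (n\ge1).
\]
Since $8\sqrt{3}\,\pi^2 > 136$ and $1 + 2.2\,n^{-1/4} \le 3.2$ for all $n\ge1$, the left-hand side is at most $3.2/136 < 1$, so the inequality holds for every $n \ge 1$.

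No step is particularly delicate here; the only point to watch is that Proposition~\ref{prop:bound on Gn} already provides a bound valid uniformly for $n\ge1$ (the factor $1 + 2.2\,n^{-1/4}$ stays moderate even at $n=1$), so there is no need to handle small $n$ separately. The main ingredients\emdash the asymptotic estimate of~$G_n$ and Stirling's inequality\emdash combine directly to yield the claim.
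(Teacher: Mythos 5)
Your proof is correct and follows essentially the same route as the paper, which simply deduces the bound from Proposition~\ref{prop:bound on Gn} together with a Stirling-type lower bound on $n!$. The only (harmless) difference is that you invoke the full Stirling bound $n!\ge\sqrt{2\pi n}\,(n/e)^n$, whereas the cruder $n!\ge(n/e)^n$ used in the paper already suffices since $1+2.2\,n^{-1/4}\le 3.2 < 4\sqrt{3}\,\pi$.
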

\begin{IEEEproof}
  Follows from Proposition~\ref{prop:bound on Gn} using $n! \ge (n/e)^n$.
\end{IEEEproof}

\begin{lemma}
  \label{lem:remainder G}
  When $x>0$ and $N+1 \ge \sqrt{2\tau} x^{3/2}$, we have
  \[ \sum_{n=0}^{N} G_n x^{3n} \le G(x) \le \sum_{n=0}^{N-1} G_n x^{3n} + 2G_N x^{3N}. \]
\end{lemma}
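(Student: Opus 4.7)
The lower bound is immediate: by Corollary~1 the $G_n$ are nonnegative, so truncating a series with nonnegative terms gives a lower bound on its sum.

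For the upper bound, set $T_n = G_n\,x^{3n}$. The plan is to show that under the hypothesis $N+1 \ge \sqrt{2\tau}\,x^{3/2}$, the successive ratios $T_{n+1}/T_n$ are bounded by $1/2$ for $n \ge N$, so that the tail of the series is geometrically dominated by its first term. From Corollary~1 we have
\[
  \frac{T_{n+1}}{T_n} = \frac{G_{n+1}}{G_n}\,x^3 \le \frac{\tau\,x^3}{(n+1)^2}.
\]
The hypothesis $N+1 \ge \sqrt{2\tau}\,x^{3/2}$ rewrites as $2\tau\,x^3 \le (N+1)^2$. Hence for every $n \ge N$ we get $(n+1)^2 \ge (N+1)^2 \ge 2\tau\,x^3$, which yields $T_{n+1}/T_n \le 1/2$.

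By a straightforward induction it then follows that $T_{N+k} \le 2^{-k}\,T_N$ for all $k \ge 1$. Summing the geometric series,
\[
  \sum_{n=N+1}^{\infty} T_n \;\le\; T_N\sum_{k\ge 1} 2^{-k} \;=\; T_N.
\]
Adding $\sum_{n=0}^{N-1}T_n + T_N$ on both sides gives
\[
  G(x) = \sum_{n=0}^{N-1} T_n + T_N + \sum_{n\ge N+1} T_n \;\le\; \sum_{n=0}^{N-1} G_n x^{3n} + 2\,G_N x^{3N},
\]
which is the claimed upper bound. The only delicate point is making sure the threshold $\sqrt{2\tau}\,x^{3/2}$ in the hypothesis is exactly what is needed to bring the ratio $\tau x^3/(n+1)^2$ down to $1/2$ for $n \ge N$; this is what motivates the particular constant appearing in the statement.
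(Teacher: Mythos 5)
Your proof is correct and follows essentially the same route as the paper: nonnegativity of the $G_n$ for the lower bound, and the ratio bound $G_{n+1}x^3/G_n \le \tau x^3/(n+1)^2 \le \tfrac12$ from Corollary~\ref{cor:positivity} together with the hypothesis $(N+1)^2 \ge 2\tau x^3$, giving a geometric tail dominated by $G_N x^{3N}$. No changes needed.
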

\begin{IEEEproof}
  The first inequality is obvious as $G_n \ge 0$. To show the second one, observe that a fortiori $\tau x^3/(n+1)^2 \le \frac12$ for all $n \ge N$. Using Corollary~\ref{cor:positivity}, we deduce $G_{n+1} x^3/G_n \le \frac12$, and hence $\sum_{n \ge N} G_n x^{3n} \le G_N x^{3N} (1 + \frac12 + \frac14 + \cdots) = 2G_N x^{3N}.$
\end{IEEEproof}

\begin{lemma} \mightbeomitted
  \label{lem:bounds G}The function~$G$ satisfies
  \[ 0.01 e^{\frac{2}{3} x^{3 / 2}} x^{- 3 / 4} \leq G(x)
     \leq 0.04 e^{\frac{2}{3} x^{3 / 2}} x^{- 3 / 4} \]
  for all $x \geq 1/2$.
\end{lemma}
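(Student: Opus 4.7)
The plan is to split the range $x \ge 1/2$ at a moderate threshold $x_0$ (to be chosen, numerically around $4$). The guiding observation is that the definition of $\tilde G$ in Eq.~\eqref{eq:asy G} gives
\[
\tilde G(x) = \frac{1}{8\pi^{3/2}}\,\frac{e^{\frac{2}{3}x^{3/2}}}{x^{3/4}}, \qquad \frac{1}{8\pi^{3/2}} \approx 0.0225,
\]
which is bracketed by $0.01$ and $0.04$ with substantial margin. So the two inequalities follow as soon as one controls $G/\tilde G$ in the two ranges, and for large $x$ this ratio is close to~$1$.

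For $x \ge x_0$, I would apply Lemma~\ref{lem:asympt approx Ai} separately to each of the three factors $\Ai(x), \Ai(jx), \Ai(j^{-1}x)$ at arguments $\theta \in \{0, \pm 2\pi/3\}$, obtaining $G(x)/\tilde G(x) = (1+\theta_1)(1+\theta_2)(1+\theta_3)$ with $|\theta_i|\le \eta_1(\theta_i)\,x^{-3/2}$. Using $\eta_1(0)=5/48$ and $\eta_1(\pm 2\pi/3) = (5/48)\cdot 2^{7/2} = 5\sqrt{2}/6$ (these are the same $\eta_1$-values already used implicitly in the bound on $G/\tilde G$ during the proof of Prop.~\ref{prop:bound on Gn}), Notation~\ref{not:E} gives $|G(x)/\tilde G(x) - 1| \le C x^{-3/2}$ for some explicit $C$ of moderate size. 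Choosing $x_0$ large enough so that $C x_0^{-3/2}\le 0.4$ (say), we get $0.6 \le G(x)/\tilde G(x) \le 1.4$ for $x\ge x_0$, which multiplied by $1/(8\pi^{3/2})\approx 0.0225$ is safely inside $[0.01,0.04]$.

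For $1/2 \le x \le x_0$, I would use Lemma~\ref{lem:remainder G}: since $\sqrt{2\tau}\,x_0^{3/2}$ is a small explicit constant, a fixed truncation order~$N$ (independent of~$x$) suffices to enclose $G(x)$ between $\sum_{n=0}^{N} G_n x^{3n}$ and $\sum_{n=0}^{N-1} G_n x^{3n}+2G_N x^{3N}$. The coefficients $G_0,\dots,G_N$ are computable exactly (or to high precision) from the initial values in Prop.~\ref{prop:rec G} and the recurrence~\eqref{eq:rec G}, and the function $x\mapsto e^{\frac{2}{3}x^{3/2}} x^{-3/4}$ admits straightforward monotonicity-based enclosures. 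A finite interval-arithmetic verification on a fine subdivision of $[1/2,x_0]$ then establishes both inequalities. The main obstacle is really this transition range: because the ratio $0.04/0.01=4$ is not very large compared with the multiplicative effect of the polynomial factor $x^{-3/4}$ and the subexponential prefactor of $G$ over a moderate interval, one must be quantitative rather than crudely asymptotic. Since this lemma is only used to provide order-of-magnitude information downstream, however, generous choices of $x_0$ and $N$ do not harm the subsequent error analysis.
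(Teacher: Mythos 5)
Your proposal is correct and follows essentially the same route as the paper: the paper also bounds $\abs{G(x)/\tilde G(x)-1}$ by $E(\eta_1(0)x^{-3/2},\sigma x^{-3/2},\sigma x^{-3/2})$ with $\sigma=\eta_1(\tfrac{2\pi}{3})$ (taking the threshold $x_0=3$, where the lower margin $\frac{1-\mu(3)}{8\pi^{3/2}}\approx 0.0104$ is already fairly tight), and handles $0.5<x\le 3$ via Lemma~\ref{lem:remainder G} together with a polynomial enclosure of $e^{\frac23 x^{3/2}}$, reducing to explicit polynomial inequalities rather than your interval-arithmetic subdivision — a cosmetic difference only.
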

\begin{IEEEproof}
  Let $\mu(x) = E (\eta_1(0) x^{- 3 / 2}, \sigma x^{- 3 / 2}, \sigma x^{- 3 / 2})$ where $\sigma = \eta_1(\frac{2 \pi}{3}) = \frac{5}{6} \sqrt{2}$.
  Lemma~\ref{lem:asympt approx Ai} implies
  $ \abs{G(x)/\tilde{G}(x) - 1} \leq \mu(x)$
  for $x \geq 0$. We can check that $\mu(x)$ is a decreasing function and
  \[ 0.01 \leq \frac{1 - \mu(3)}{8 \pi^{3 / 2}},
     \hspace{2em} \frac{1 + \mu(3)}{8 \pi^{3 / 2}} \leq
     0.04, \]
  whence the desired inequality for~$x \geq 3$.
  For $0.5 < x \le 3$, using the bounds
  $0
  \leq e^x - (1 + x + \frac12 x^2 + \frac16 x^3 + \frac{1}{24} x^4)
  \leq \frac{5}{48} x^4$
  (valid for $0 < x \le \frac23 3^{3/2}$) and Lemma~\ref{lem:remainder G}, we are reduced to checking explicit polynomial inequalities.
\end{IEEEproof}

\section{Roundoff error analysis}
\label{sec:roundoff Analysis}

We now turn to the floating-point implementation of the functions $F(x)$ and $G(x)$.
To make the algorithm rigorous, we will use classical techniques of error analysis that we briefly recall here.
We refer the reader, e.g., to Higham~\cite{Higham}, for proofs and complement of information.

We suppose that the precision of the floating-point format is $\twork$~bits and that the exponent range is unbounded (in case it is bounded, it would probably be possible to rescale $F(x)$ and $G(x)$ by the same factor, to make them representable without changing the ratio $F(x)/G(x)$).

\begin{notation}
  \label{not:EXP}
  For $x \neq 0$, we denote $\EXP(x) = \lfloor \log_2 |x| \rfloor + 1$, so that $2^{\EXP(x)-1} \le |x| < 2^{\EXP(x)}$.
\end{notation}

\begin{notation}
  \label{not:Circ}
    If $x \in \R$, $\circ(x)$ denotes the floating-point number closest to $x$ (ties can be decided either way).
  Circled operators such as~$\oplus$ denote correctly rounded floating-point operations.
\end{notation}

We always  have $\circ(x) = x\,(1+\delta)$ and $x = \circ(x)\,(1+\delta')$ with $|\delta|, |\delta'| \le 2^{-\twork}$.
We will also extensively use the \emph{relative error counter} notation $z\rnd{k}$.

\begin{notation}
We write $\widehat z = z \rnd k$ when there exist $\delta_1, \dots, \delta_k$ such that $\widehat{z} = z \prod_{i=1}^k (1+\delta_i)^{\pm 1}$ with $|\delta_i| \le 2^{-\twork}$ for all~$i$.
\end{notation}

Roughly speaking, each arithmetical operation adds one to the relative error counter of a variable. The overall error corresponding to an error counter can  be bounded as follows.

\begin{proposition}
\label{errorsAccumulation}
  Suppose that we can write $\widehat{z} = z\rnd{k}$ and that $k\,2^{-\twork} \le 1/2$. Then $\widehat{z}=z(1+\theta)$ with $|\theta| \le 2k \cdot 2^{-\twork}$.
\end{proposition}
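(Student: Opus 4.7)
The plan is to prove the standard Higham-style bound $|\theta| \le ku/(1-ku)$ where $u = 2^{-\twork}$, and then deduce $|\theta| \le 2ku$ from the hypothesis $ku \le 1/2$, since $1-ku \ge 1/2$ gives $ku/(1-ku) \le 2ku$.

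First, I would unfold the definition: write $\widehat{z}/z = \prod_{i=1}^{k} (1+\delta_i)^{\varepsilon_i}$ where each $\varepsilon_i \in \{+1,-1\}$ and $|\delta_i| \le u$. Setting this product equal to $1+\theta$, the goal is to control $|\theta|$. I would proceed by induction on $k$, with the statement: whenever $ku < 1$, $|\theta_k| \le \gamma_k := ku/(1-ku)$. The base case $k=1$ is immediate in both sign cases: for $\varepsilon_1 = +1$, $|\theta_1|=|\delta_1|\le u \le u/(1-u)$, and for $\varepsilon_1 = -1$, $\theta_1 = -\delta_1/(1+\delta_1)$, so $|\theta_1| \le u/(1-u) = \gamma_1$.

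For the inductive step, assume $1+\theta_k = \prod_{i=1}^k(1+\delta_i)^{\varepsilon_i}$ with $|\theta_k|\le\gamma_k$. In the multiplicative case, $\theta_{k+1} = \theta_k + \delta_{k+1} + \theta_k\delta_{k+1}$, giving $|\theta_{k+1}| \le \gamma_k(1+u)+u$. In the divisive case, $\theta_{k+1} = (\theta_k - \delta_{k+1})/(1+\delta_{k+1})$, giving $|\theta_{k+1}| \le (\gamma_k + u)/(1-u)$. A short algebraic check (multiplying out and using $(k+1)u < 1$) shows that each of these upper bounds is at most $\gamma_{k+1} = (k+1)u/(1-(k+1)u)$, completing the induction.

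Finally, the hypothesis $ku \le 1/2$ gives $1 - ku \ge 1/2$, hence $|\theta| \le \gamma_k \le 2ku = 2k\cdot 2^{-\twork}$, which is the announced bound. The only mildly delicate part is verifying the two algebraic inequalities $\gamma_k(1+u)+u \le \gamma_{k+1}$ and $(\gamma_k+u)/(1-u) \le \gamma_{k+1}$; both reduce, after clearing denominators, to inequalities of the form $u^2(\text{nonneg. terms}) \ge 0$, so they present no real obstacle. The proof is otherwise a straightforward unfolding.
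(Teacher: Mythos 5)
Your proof is correct: the induction establishing $|\theta|\le \gamma_k = k u/(1-ku)$ with $u=2^{-\twork}$ (both inductive cases indeed reduce to a leftover term $ku^2\ge 0$ after clearing denominators), followed by $1-ku\ge 1/2$, gives exactly the stated bound $2k\cdot 2^{-\twork}$. This is essentially the route the paper takes: it offers no proof of its own and simply invokes the classical Higham-style lemma that you prove from scratch, so your argument just makes the cited standard result explicit.
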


\section{Evaluation of the modified series}

As we shall see in the next section, evaluating the auxiliary function $F$ is fairly straightforward.
The evaluation of~$G$ is more involved.
Indeed, while $\sum G_n x^{3n}$ is well-conditioned as a sum for $x \geq 0$ (this is the whole point of the GMR method), the minimality of the \emph{sequence} $(G_n)$ among the solutions of~\eqref{eq:rec G} implies that its direct recursive computation from the initial values~$G_0$ and $G_1$ is numerically unstable (cf. \cite{Wimp1984}).

There is a standard tool to handle this situation, namely Miller's backward recurrence method~{\cite{BickleyComrieMillerSadlerThompson1952,Wimp1984}}. Miller's method allows one to accurately evaluate the minimal solution $(c_n)$ of a recurrence of the form
\begin{equation}
  a_2(n)\,u_{n+2} + a_1(n)\,u_{n+1} + a_0(n)\,u_n = 0, \quad a_0(n) a_2(n) \neq 0. \label{eq:rec general form}
\end{equation}
The idea is as follows: choose a starting index $R$ and let (arbitrarily) $u_{R+1} = 0$ and $u_R = 1$. Then compute $u_n$ as
\[ -a_0(n)^{-1}\,(a_1(n)\,u_{n+1} + a_2(n)\,u_{n+2}), \quad n=R-1, \dots, 1, 0. \]
It turns out that, for large~$R$, the computed sequence $(u_n)$ is close to a minimal solution of the forward recurrence. Since all minimal solutions are proportional to each other, we recover an approximation of $c_n$ as $c_n \approx (c_0/u_0)\,u_n$.

We use Miller's method to evaluate the minimal solution $(c_n)$ of the normalized recurrence~\eqref{eq:norm rec G}, and we get an approximation of $G(x) = \sum_{n \ge 0} G_n\,x^{3n}$ as $S_N = (c_0/u_0)\,\sum_{n=0}^{N-1} u_n\,x^{3n}/(n!^2)$.
The algorithm is summed up as Algorithm~\ref{algoG}. The rest of this section is devoted to its proof of correctness, i.e., the proof that the value $s$ it returns satisfies $|G(x)-s| \le 3 \cdot 2^{-\ttarget} G(x)$.
\begin{algorithm}[t]
  \KwIn{a target precision $\ttarget \ge 1$, a point $x \geq 0.5$}
  \KwOut{$s$ such that $|G(x)-s| \le 3\cdot 2^{-\ttarget}\,G(x)$}
  Choose $\alpha, \beta, \delta, \gamma$ s.t.
  $\alpha \lesssim 3 \, e^{-1} x^{-3/2}$,
  $\beta \lesssim (2/3)\,\log_2(e)\,x^{3/2}$,
  $\gamma \gtrsim 1/\log_2(20/3)$,
  $\delta \gtrsim (2/3)\,\log_2(e)\,((20/3)^{1/2}-1)\,x^{3/2}$ \;
  $N_0 \gets \max(1, \lceil (3/10)^{1/2}\,x^{3/2}-1 \rceil)$\nllabel{choiceOfN}\;
%
%
%
    Choose $N \ge N_0$ s.t. $\EXP((\alpha\,N)^{2N}) \ge \ttarget + 9 +
    \frac34\EXP(x) - \lfloor \beta \rfloor$\nllabel{choiceOfNlargex}\;
    Choose $R \ge \max(N, (\ttarget + 2 + \delta)\,\gamma)$\;
    Choose $\twork$ s.t. $128\,(N+3)\,2^{-\twork} \le 2^{-\ttarget}$ and $(R+2)\,2^{-\twork}\le 2^{-9}$\nllabel{choiceeOft}\;
  \For{$(a \gets 1, b \gets 0, i \gets R-1; i\ge0; i\gets i-1)$}{
    $c \gets a$\;
    $a \gets (10\otimes a)\, \ominus\, (3i+4)(3i+5)\otimes b \oslash ((i+1)(i+2))$\nllabel{matrixMul1}\;
    $b \gets c$\nllabel{matrixMul2}\;
    \lIf{$i = N-1$}{$s' \gets a$\nllabel{horner1}\;}
    \lElseIf{$i < N-1$}{$s' \gets a \oplus (x^3 \otimes s' \oslash (i+1)^2)$\nllabel{horner2};}  }
  \KwRet $s = (s' \oslash a) \oslash \circ(9\,\Gamma(2/3)^3)$\nllabel{finalDivision}\;
  \caption{Evaluation of $G$}
  \label{algoG}
\end{algorithm}

\begin{proposition}
  \label{prop:N well chosen}
  With $N$ as in Algorithm~\ref{algoG}, the truncation error satisfies $|\sum_{n=N}^{\infty} G_n x^{3n}| \le 2^{-\ttarget}\,G(x)$ for all $x \geq 0.5$.
\end{proposition}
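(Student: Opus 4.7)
The plan is to chain together the three main estimates proved in Section~\ref{sec:G}: Lemma~\ref{lem:remainder G} to reduce the tail to a single coefficient, Corollary~\ref{cor:upper bound on Gn} to bound that coefficient from above, and Lemma~\ref{lem:bounds G} to bound $G(x)$ from below. The size condition in Line~\ref{choiceOfNlargex} of the algorithm is exactly what makes the resulting inequality come out with a factor of $2^{-\ttarget}$.

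First, I would verify that Lemma~\ref{lem:remainder G} applies. Since $2\tau = 3/10$, the choice $N_0 = \max(1, \lceil \sqrt{3/10}\,x^{3/2} - 1 \rceil)$ on line~\ref{choiceOfN} yields $N+1 \ge N_0 + 1 \ge \sqrt{2\tau}\,x^{3/2}$, so
\[
\sum_{n=N}^{\infty} G_n x^{3n} \le 2 G_N x^{3N}.
\]
Next, Corollary~\ref{cor:upper bound on Gn} (applicable because $N \ge 1$) gives $G_N \le (e/(3N))^{2N}$, hence
\[
2 G_N x^{3N} \le 2 \left(\tfrac{e x^{3/2}}{3N}\right)^{2N} \le 2\,(\alpha N)^{-2N},
\]
where the last step uses the defining property $\alpha \le 3 e^{-1} x^{-3/2}$.

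Second, I would use Lemma~\ref{lem:bounds G} (applicable because $x \ge 1/2$) together with $\beta \le (2/3)\log_2(e)\,x^{3/2}$, i.e., $2^\beta \le e^{(2/3)x^{3/2}}$, to obtain the lower bound
\[
G(x) \ge 0.01\,e^{(2/3)x^{3/2}} x^{-3/4} \ge 0.01 \cdot 2^\beta \cdot x^{-3/4}.
\]

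Finally, I would combine these with the size condition of line~\ref{choiceOfNlargex}. Using $\EXP(y) - 1 \le \log_2 y$, the hypothesis $\EXP((\alpha N)^{2N}) \ge \ttarget + 9 + \tfrac{3}{4}\EXP(x) - \lfloor \beta \rfloor$ yields
\[
(\alpha N)^{2N} \ge 2^{\ttarget + 8 + (3/4)\EXP(x) - \lfloor \beta \rfloor}.
\]
Since $\lfloor \beta \rfloor \le \beta$ and $\tfrac{3}{4}\EXP(x) \ge \tfrac{3}{4}\log_2 x$, inverting gives
\[
2\,(\alpha N)^{-2N} \le 2^{-\ttarget - 7} \cdot 2^\beta \cdot x^{-3/4}.
\]
Comparing with the lower bound on $G(x)$ and using $2^{-7} = 1/128 \le 0.01$, we conclude
\[
\sum_{n=N}^{\infty} G_n x^{3n} \le 2^{-\ttarget-7} \cdot 2^\beta \cdot x^{-3/4} \le 2^{-\ttarget}\,G(x),
\]
which is the required bound. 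There is no real obstacle here: the only delicate point is bookkeeping the conversions between $\log_2$, $\EXP$, and $\lfloor\cdot\rfloor$, and checking that the looseness left by the symbols $\lesssim$ and $\gtrsim$ in the definitions of $\alpha$, $\beta$ is consumed by the slack of $2$ bits between the required factor $0.01$ and the available factor $2^{-7}$.
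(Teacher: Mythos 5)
Your proof is correct and follows essentially the same route as the paper's: Lemma~\ref{lem:remainder G} plus Corollary~\ref{cor:upper bound on Gn} to bound the tail by $2(\alpha N)^{-2N}$, then the condition of line~\ref{choiceOfNlargex} and the lower bound of Lemma~\ref{lem:bounds G} to absorb the factor $2^{-\ttarget}/128$. You merely spell out the $\EXP$/$\lfloor\cdot\rfloor$ bookkeeping that the paper leaves implicit, and that bookkeeping is accurate.
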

\begin{IEEEproof}
    First, because of line~\ref{choiceOfN} of the algorithm, we have $\frac{3}{10}x^3 \le (N+1)^2$, so that
    \[
      \Bigl| \sum_{n=N}^{\infty} G_n x^{3n} \Bigr|
      \leq
      2G_N x^{3N}
      \leq
      2 \left(\frac{ex^{3/2}}{3N}\right)^{2N}
      \leq
      2 \, (\alpha N)^{-2N}
    \]
    by Lemma~\ref{lem:remainder G} and Corollary~\ref{cor:upper bound on Gn}.
    Line~\ref{choiceOfNlargex} then ensures that 
    \[
    2 \, (\alpha N)^{-2N}
    \le \frac{2^{-\ttarget}}{128} \, e^{\frac{2}{3} x^{3/2}} x^{-3/4}
    \le 2^{-p} G(x),
  \]
  the last inequality coming from Lemma~\ref{lem:bounds G}.
\end{IEEEproof}

There are two sources of error besides the truncation: first, $(u_n)$ is not exactly proportional to $(c_n)$, especially when $n$ is close to $R$. Second, roundoff errors happen during the evaluation of~$(u_n)$.
Rigorous bounds for both sources of error have been proposed by Mattheij and van~der~Sluis~\cite{MvdS}.
We combine them with classical techniques (well-explained, e.g., in~\cite{Chevillard2012}) to choose the starting index $R$ and the working precision~$\twork$ so as to guarantee the final accuracy.

We now recall Mattheij and van~der~Sluis' main result (adapted to our particular case, which simplifies the statement quite a bit).
Consider a recurrence of the form~\eqref{eq:rec general form}. Denote by $(c_n)$ a minimal solution that we wish to evaluate, and let $(d_n)$ be the solution such that $d_0 = d_1 = 1$. Assume that $(d_n)$ is a dominant solution and that the sequences $(c_n)$, $(d_n)$ and $(c_n/d_n)$ are decreasing. Define $H = \frac{d_0}{c_0} \sum_{i=0}^{R-1} \frac{c_i}{d_i}$ and, for $i \le R$,
\[U_i = 
    \begin{pmatrix}
      c_i & c_i\\
      c_{i+1} & c_i \,\frac{d_{i+1}}{d_i}
    \end{pmatrix} \text{ and }
B_i = 
    \begin{pmatrix}
      \frac{-a_1(i)}{a_0(i)} & \frac{-a_2(i)}{a_0(i)}\\
      1 & 0
    \end{pmatrix}.
\]
Let $v_R \in \R^2$ be a column vector, and for $i \le R-1$, let $v_i$ be the result of the floating-point evaluation of $B_i\,v_{i+1}$ at precision~$\twork$. Write $v_i = \transp{(u_i, u_{i+1})}$. If all operations were exact, $(u_i)$ would be the solution of the recurrence such that $\transp{(u_R, u_{R+1})}=v_R$. To take rounding errors into account, we write  $v_i = (B_i + 2^{-\twork} \mathcal{G}_i)v_{i+1}$ for some matrix~$\mathcal{G}_i$ instead. Define $y_R = \transp{(y_{R1}, y_{R2})} = U_R^{-1} v_R$. Let $\Phi_i = \|U_i^{-1}\mathcal{G}_i U_{i+1}\|$,  the matrix norm being subordinate to the $\|\cdot\|_\infty$ norm for vectors, and let $\Phi \ge \max_i \Phi_i$. 

\begin{theorem} \cite[Theorem~4.1]{MvdS}
\label{thm:MvdS}
Provided that the quantities
\[
  \Phi R 2^{-\twork}, \quad
  \frac{c_R\,d_0}{d_R\,c_0} \left|\frac{y_{R2}}{y_{R1}}\right|, \quad
  \frac{\|y_R\|_\infty}{|y_{R1}|}\, (R+H) (1.3 \Phi 2^{-\twork})
\]
are all bounded by~$0.1$, the approximate value $u_i$ computed by Miller's algorithm satisfies
$(c_0/u_0)\,u_i = c_i \, (1+\theta_i)$ for some $\theta_i$ such that $|\theta_i| \le T_i + R_i$, where
\[
  T_i = 1.5 \left(\frac{c_R\,d_i}{c_i\,d_R} + \frac{c_R\,d_0}{c_0\,d_R}\right) \left|\frac{y_{R2}}{y_{R1}}\right|,
  \quad
 R_i = 1.5 \, \frac{\|y_R\|_\infty}{y_{R1}} \, \varepsilon \, (i+2H),
\]
and $\varepsilon = 1.3 \Phi 2^{-\twork}$.
\end{theorem}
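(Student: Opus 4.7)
Since the statement is quoted from~\cite[Theorem~4.1]{MvdS}, the natural plan is to sketch Mattheij and van der Sluis' argument, adapted to the three-term recurrence~\eqref{eq:rec general form}. The idea is to cleanly separate two sources of error in the backward recurrence: a \emph{mathematical truncation error}, coming from the fact that the exact backward recurrence started at a finite $R$ from an arbitrary initial vector $v_R$ produces a linear combination of the minimal and dominant solutions instead of a pure multiple of~$(c_n)$; and an \emph{accumulated roundoff error}, coming from the perturbed matrices $B_i + 2^{-\twork}\mathcal{G}_i$. I would prove a bound for each contribution separately and then combine them via Notation~\ref{not:E}, checking that the smallness assumptions on $\Phi R 2^{-\twork}$, $\frac{c_R d_0}{d_R c_0}|y_{R2}/y_{R1}|$ and the $H$-term guarantee that cross-terms stay below the stated $T_i + R_i$.

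For the truncation part, I would work with the exact backward recurrence started at $v_R$. The columns of $U_i$ form a basis of the solution space adapted to $(c_n)$ and $(d_n)$, so the exact solution $u_i^{\mathrm{ex}}$ satisfies $\transp{(u_i^{\mathrm{ex}},u_{i+1}^{\mathrm{ex}})} = U_i y_R$, that is $u_i^{\mathrm{ex}} = y_{R1}\,c_i + y_{R2}\,d_i$. After normalization by $u_0^{\mathrm{ex}}/c_0$, the relative error on $c_i$ is $(y_{R2}/y_{R1})(d_i/c_i - d_0/c_0)/(1 + (y_{R2}/y_{R1})\,d_0/c_0)$. The smallness hypothesis on $\frac{c_R d_0}{d_R c_0} |y_{R2}/y_{R1}|$ controls the denominator, and rewriting $d_i/c_i = (c_R d_i)/(c_i d_R)\cdot (d_R/c_R)$ and likewise for the $i=0$ term yields precisely the factor $\frac{c_R d_i}{c_i d_R}+\frac{c_R d_0}{c_0 d_R}$ appearing in $T_i$, with the overall $1.5$ coming from the bookkeeping of the $(1+\cdot)^{-1}$ denominator.

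For the roundoff part, I would apply the change of basis $w_j := U_j^{-1} v_j$. A direct computation gives $U_j^{-1} B_j U_{j+1} = \mathrm{diag}(1,1)$ (this is just the statement that $(c_n)$ and $(d_n)$ solve the recurrence), so $w_j = (I + 2^{-\twork} U_j^{-1}\mathcal{G}_j U_{j+1})\,w_{j+1}$. Iterating backwards from $j=R$ down to $j=i$ and controlling each step by $\|U_j^{-1}\mathcal{G}_j U_{j+1}\|_\infty \le \Phi$, a standard perturbation argument (use $\prod_j (I + 2^{-\twork} M_j) = I + 2^{-\twork} \sum_j M_j + O((R\,2^{-\twork}\Phi)^2)$, with the higher-order terms absorbed by the $\Phi R 2^{-\twork} \le 0.1$ hypothesis) shows that $w_i$ is $y_R$ perturbed by at most $R\,\varepsilon$ in each coordinate. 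Translating the first coordinate of this perturbation back to an error on $u_i/c_i$ picks up the normalization $u_0/c_0$: the factor $\|y_R\|_\infty/y_{R1}$ measures how large the perturbation can be relative to $y_{R1}$, while the sum $\sum_{j<R} c_j/d_j \le H\,c_0/d_0$ accounts for the dominant-direction contamination when the perturbation is re-expressed in the original basis, producing the $(i+2H)$ factor.

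The main obstacle is the bookkeeping in the second step: one must track a perturbation that enters in the dominant direction and re-expresses itself, after division by $u_0$, as a perturbation of the minimal-solution component at index~$i$. This is where the $H$ quantity intervenes non-trivially, and where the hypothesis $\frac{\|y_R\|_\infty}{|y_{R1}|}(R+H)\,\varepsilon \le 0.1$ is used both to keep the renormalization factor $c_0/u_0$ close to $c_0/u_0^{\mathrm{ex}}$ and to linearize. Combining the bounds from the two steps via Proposition~\ref{errorsAccumulation} (applied to $(1+\theta_i^{\mathrm{trunc}})(1+\theta_i^{\mathrm{round}})$) yields $|\theta_i| \le T_i + R_i$ up to a second-order term that the three smallness hypotheses make negligible.
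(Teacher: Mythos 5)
First, note that the paper itself gives no proof of this statement: Theorem~\ref{thm:MvdS} is imported, with notation adapted, from Mattheij and van der Sluis \cite[Theorem~4.1]{MvdS}, so your sketch has to stand on its own. Its high-level plan (split the error into the truncation effect of starting the backward recurrence at a finite $R$ and the accumulated roundoff from the perturbed matrices $B_i+2^{-\twork}\mathcal{G}_i$, and work in a basis adapted to the minimal/dominant pair) does match the strategy of the original proof. However, both of your key identities are wrong for the matrices $U_i$ as defined in the paper, and the errors are not cosmetic. Since the second column of $U_i$ is $(c_i/d_i)\,\transp{(d_i,d_{i+1})}$ rather than $\transp{(d_i,d_{i+1})}$, one computes $U_j^{-1}B_jU_{j+1}=\mathrm{diag}\bigl(1,\ \rho_j\bigr)$ with $\rho_j=(c_{j+1}d_j)/(c_jd_{j+1})<1$, not $\mathrm{diag}(1,1)$. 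This contraction in the second coordinate is precisely the mechanism that makes Miller's method work, and it is what turns the sum of the roundoff perturbations injected at the different steps into the weighted sums $\sum_j c_j/d_j$ hidden in $H$, giving $R_i$ its $(i+2H)$ shape. With the identity transfer you claim, a ``perturbed by at most $R\varepsilon$ in each coordinate'' argument would only produce a bound proportional to $R\varepsilon$, which is not the statement being proved, and your later appeal to $\sum_{j<R}c_j/d_j\le H\,c_0/d_0$ is inconsistent with that claim.

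The truncation half has the same flaw: because the $U_i$-coordinates of an exact solution are \emph{not} constant in $i$ (again because $U_j^{-1}B_jU_{j+1}\neq I$), the relation $\transp{(u_i^{\mathrm{ex}},u_{i+1}^{\mathrm{ex}})}=U_iy_R$ is false. The correct decomposition is $u_i^{\mathrm{ex}}=y_{R1}\,c_i+y_{R2}\,(c_R/d_R)\,d_i$, from which the normalized error $(c_0/u_0^{\mathrm{ex}})u_i^{\mathrm{ex}}/c_i-1$ directly produces the factor $\frac{c_R d_i}{c_i d_R}+\frac{c_R d_0}{c_0 d_R}$ of $T_i$ (the $1.5$ absorbing the denominator $1+(y_{R2}/y_{R1})\frac{c_Rd_0}{d_Rc_0}$, controlled by the $0.1$ hypothesis). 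Your version $u_i^{\mathrm{ex}}=y_{R1}c_i+y_{R2}d_i$ misses the scaling $c_R/d_R$, so the bound you would obtain is larger than $T_i$ by the (enormous) factor $d_R/c_R$; re-inserting the scaling when you rewrite $d_i/c_i$ does not repair this, it only masks the inconsistency. So while the architecture of the argument is the right one, the two central computations on which it rests would have to be redone with the correct transfer matrices before the sketch could be called a proof.
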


Turning back to the special case $c_n = n!^2 G_n$, Theorem~\ref{thm:MvdS} applied to~\eqref{eq:norm rec G} yields the following.
Recall that $\tau = 3/20$.

\begin{corollary}
\label{cor:MvdS to our case}
Set $v_R = \transp{(1, 0)}$ and
\[B_i = 
  \begin{pmatrix}
    10 & -r(i)\\
    1  & 0
  \end{pmatrix}
  \quad
  \text{where}
  \quad
  r(n) = \frac{(3n+4)(3n+5)}{(n+1)(n+2)}.
  \]
Then, in the notation of Theorem~\ref{thm:MvdS}, we have
$T_i \le \tau^{R-i}$ and $R_i \le 76.5(i+4)2^{-\twork} \le 76.5(N+3)2^{-\twork}$
for all $i <N$.
\end{corollary}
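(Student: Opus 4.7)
The corollary is an instance of Theorem~\ref{thm:MvdS}. Proposition~\ref{prop:bound on Gn} identifies $(c_n) = (n!^2 G_n)$ as the minimal solution of~\eqref{eq:norm rec G}, since $c_{n+1}/c_n \to 1/9$, the smaller root of the limit characteristic polynomial. Taking $(d_n)$ with $d_0 = d_1 = 1$ gives a dominant solution; a short induction on $\rho_n := d_{n+1}/d_n = (10 - 1/\rho_{n-1})/r(n-1)$ (using that $r$ is decreasing with $9 < r(n) \le 10$) shows $5/6 \le \rho_n \le 1$ for all $n \ge 1$, so $(d_n)$ is positive, decreasing, and converges to a limit $D > 0$. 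An interval-arithmetic computation of enough initial terms, combined with the tail bound $\rho_n = 1 - O(1/n^2)$, then gives $D \ge 10/17$, from which the decrease of $(c_n/d_n)$ also follows.

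With $v_R = \transp{(1, 0)}$, a direct matrix inversion gives
\[
  y_R = (\det U_R)^{-1}\transp{\bigl(c_R\,d_{R+1}/d_R,\ -c_{R+1}\bigr)}
\]
with $\det U_R > 0$. Writing $\mu_n := c_n/d_n$, we obtain $|y_{R2}/y_{R1}| = \mu_{R+1}/\mu_R < 1$ and hence $\|y_R\|_\infty/y_{R1} = 1$. The identity $(c_R d_i)/(c_i d_R) = \mu_R/\mu_i$ turns $T_i$ into $\tfrac{3}{2}(\mu_{R+1}/\mu_i + \mu_{R+1}/\mu_0)$. Combining Cor.~\ref{cor:positivity} ($c_{n+1}/c_n \le \tau$) with $d_i/d_{R+1} \le 1/D$ yields $\mu_{R+1}/\mu_n \le \tau^{R+1-n}/D$ for all $n \le R$, hence $T_i \le (3\tau/D)\,\tau^{R-i}$. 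Since $D \ge 10/17 > 3\tau = 9/20$, we conclude $T_i \le \tau^{R-i}$.

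For $R_i$, the same geometric-series argument gives $H = \sum_{n=0}^{R-1}\mu_n/\mu_0 \le 1/((1-\tau)D) \le 2$, so $i + 2H \le i + 4$. To bound $\Phi = \sup_i \|U_i^{-1}\mathcal{G}_i U_{i+1}\|_\infty$, I would write $\mathcal{G}_i$ explicitly from the four floating-point operations on line~\ref{matrixMul1} of Algorithm~\ref{algoG}, whose entries are bounded by small integer multiples of $\max(10, r(i)) \le 10$. Computing $U_i^{-1}\mathcal{G}_i U_{i+1}$ by hand and exploiting that the two columns of $U_i$ have comparable norms (because the gap $\rho_i - c_{i+1}/c_i \ge 5/6 - \tau$ is bounded away from zero) gives a uniform bound $\Phi \le \Phi_0$ for an explicit constant $\Phi_0$. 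Choosing the constants so that $1.5\cdot 1.3\,\Phi_0 \le 76.5$ then yields $R_i = 1.95\,\Phi\,(i+2H)\,2^{-\twork} \le 76.5(i+4)\,2^{-\twork}$.

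The main obstacle is the matrix-norm bookkeeping for $\Phi$, which must be done carefully enough to meet the explicit constant $76.5$; the change of basis by $U_i$ is the delicate ingredient, since it rescales the roundoff contributions in a nontrivial way. Establishing the lower bound $D \ge 10/17$ on the dominant solution is the other nonroutine step, but it reduces to a finite certified computation plus the easy tail estimate from $\rho_n - 1 = O(1/n^2)$.
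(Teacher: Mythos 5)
Your skeleton is the paper's: apply Mattheij--van der Sluis after proving monotonicity of $(c_n)$, $(d_n)$, $(c_n/d_n)$, a positive lower bound for $(d_n)$, the explicit $y_R$, $H\le 2$, and a uniform bound on $\Phi$; and indeed your $T_i$ chain ($|y_{R2}/y_{R1}|=\frac{c_{R+1}d_R}{c_Rd_{R+1}}$, $T_i\le(3\tau/D)\,\tau^{R-i}$, $H\le 1/((1-\tau)D)$) is sound once $D\ge 10/17$ is granted. The genuine gap is the bound on $\Phi$, which is exactly where the constant $76.5$ comes from and which you defer. Writing ``choosing the constants so that $1.5\cdot 1.3\,\Phi_0\le 76.5$'' is circular: $76.5$ is fixed by the statement, so you must actually prove $\Phi_0\le 39$ (more precisely $\le 76.5/1.95\approx 39.2$). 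The ingredients you establish are too weak for that: from $d_{i+1}/d_i\ge 5/6$ you only get $\det U_i\ge(\tfrac56-\tau)c_i^2=\tfrac{41}{60}c_i^2$, and carrying out the entrywise product $|U_i^{-1}|\,|\mathcal{G}_i|\,|U_{i+1}|$ with the correct $|\mathcal{G}_i|\le\left(\begin{smallmatrix}40 & 10\,r(i)\\ 0 & 0\end{smallmatrix}\right)$ then gives roughly $\Phi_i\le\tfrac{9}{41}\bigl(80+11.5\,r(i)\bigr)\approx 43$, hence $1.95\,\Phi\approx 84>76.5$. To reach $39$ one needs the sharper estimate $d_i\le(1+\tfrac{1}{3i^2})\,d_{i+1}$ (the paper's Lemma~\ref{lem:bound dn}), which yields $d_{i+1}/d_i\ge\tfrac9{10}$, $\det U_i\ge\tfrac34 c_i^2$, and $\Phi_i\le 16+\tfrac{23}{10}r(i)\le 39$; one also needs to pin $\mathcal{G}_i$ down exactly from the operation count on line~\ref{matrixMul1} (the products $10a\rnd{2}$ and $r(i)b\rnd{5}$ give the entries $4\cdot 10$ and $10\,r(i)$) --- ``small integer multiples of $10$'' does not determine the constant. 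Your lower bound $D\ge 10/17$ likewise silently relies on an explicit $1+\tfrac1{3n^2}$-type tail estimate, i.e. on the same lemma you left implicit.

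A second omission: you never verify the three smallness hypotheses of Theorem~\ref{thm:MvdS}, namely that $\Phi R2^{-\twork}$, $\frac{c_Rd_0}{d_Rc_0}\bigl|\frac{y_{R2}}{y_{R1}}\bigr|$ and $\frac{\|y_R\|_\infty}{|y_{R1}|}(R+H)(1.3\Phi 2^{-\twork})$ are all $\le 0.1$. These are not automatic: the first and third hinge on the algorithm's choice $(R+2)2^{-\twork}\le 2^{-9}$ together with the value of $\Phi$, and with $\Phi=39$ the third evaluates to about $50.7\cdot 2^{-9}\approx 0.099$, so it holds with essentially no margin --- any looser $\Phi_0$ breaks it as well. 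The second condition does follow from estimates you already have ($c_R/c_0\le\tau^R$, $d_0/d_R\le 1/D$, $|y_{R2}/y_{R1}|<1$), but it must be checked before the theorem can be invoked.
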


Proving this corollary still requires some work. We postpone it for a bit to explore the consequences of this statement.

Observe that lines~\ref{matrixMul1} and~\ref{matrixMul2} of Algorithm~\ref{algoG} are equivalent to a floating-point evaluation of $B_i v_{i+1}$ where $v_{i+1} = \transp{(a, b)}$. Hence, at each loop turn, we have $a = u_i$ just after line~\ref{matrixMul1}. Lines~\ref{horner1} and~\ref{horner2} are a Horner-like evaluation scheme, so that $s' \approx \sum_{i=0}^{N-1} u_i x^{3i}/i!^2$ at the end of the loop. More precisely, assuming $x^3$ is approximated by $\circ(x) \otimes \circ(x) \otimes \circ(x)$ and division by $(i+1)^2$ is performed as two successive divisions by $(i+1)$, an easy induction shows that one can write
\[ s' = \sum_{i=0}^{N-1} \left(u_i \,\frac{x^{3i}}{i!^2}\right)\rnd{9(i+1)}. \]
Line~\ref{finalDivision} adds $3$ to all error counters. The choice of~$\twork$ on line~\ref{choiceeOft} ensures that $(9(N+1)+3)\cdot2^{-t}\leq1/2$.  Using Prop.~\ref{errorsAccumulation}, we conclude that the sum~$s$ returned by Algorithm~\ref{algoG} satisfies
\[ s = \sum_{i=0}^{N-1} \frac{c_0\,u_i}{u_0} \cdot \frac{x^{3i}}{i!^2}\,(1+\mu_i) = \sum_{i=0}^{N-1} G_i\,x^{3i}(1+\mu_i)(1+\theta_i), \]
where $|\mu_i| \le 2 \, (9(i+1)+3) \cdot 2^{-\twork} \le 18(N+3)\,2^{-\twork}$ and $|\theta_i| \le T_i + R_i.$

Since $R \ge N$, we have $T_i \le \tau$ for all $i < N$ by Corollary~\ref{cor:MvdS to our case}. The choice of~$\twork$ also implies $R_i \le 76.5/256$. Altogether, this ensures that $|\theta_i| \le 1$. Writing $(1+\mu_i)(1+\theta_i) = 1+\delta_i$, we get 
\[ |\delta_i| \le 2|\mu_i| + |\theta_i| \le 112.5(N+3)\,2^{-\twork} + \tau^{R-i}
   \le 2^{-\ttarget} + \tau^{R-i},
\]
and therefore
$ \left|s - \sum_{i=0}^{N-1} G_i x^{3i}\right| \le 2^{-\ttarget} G(x) + \tau^{R} G(\tau^{-1/3} x)$.
\begin{lemma}
    For $x > 0.5$, we have $\tau^R\, G(\tau^{-1/3}x) \le 2^{-\ttarget}G(x)$.
\end{lemma}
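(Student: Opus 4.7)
The plan is to reduce the inequality to a bound on $\log_2(G(\tau^{-1/3}x)/G(x))$ that matches, term by term, the way $R$ is selected in Algorithm~\ref{algoG}. Since $\tau^{-1/3} = (20/3)^{1/3} > 1$, the point $\tau^{-1/3}x$ exceeds $x$, and therefore both $x$ and $\tau^{-1/3}x$ lie in the range $[1/2, \infty)$ where Lemma~\ref{lem:bounds G} applies. That is the only analytic input needed.

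First I would apply Lemma~\ref{lem:bounds G}, upper-bounding $G(\tau^{-1/3}x) \le 0.04\,e^{(2/3)\tau^{-1/2}x^{3/2}} (\tau^{-1/3}x)^{-3/4}$ and lower-bounding $G(x) \ge 0.01\, e^{(2/3)x^{3/2}} x^{-3/4}$. Taking the ratio gives
\[
\frac{G(\tau^{-1/3}x)}{G(x)} \le 4\,\tau^{1/4}\,\exp\!\left(\tfrac{2}{3}(\tau^{-1/2}-1)\,x^{3/2}\right),
\]
so multiplying by $\tau^R$ and taking $\log_2$, the target inequality $\tau^R G(\tau^{-1/3}x) \le 2^{-p}G(x)$ is equivalent to
\[
(R+\tfrac{1}{4})\log_2(\tau^{-1}) \ge p + 2 + \tfrac{2}{3}\log_2(e)\,(\tau^{-1/2}-1)\,x^{3/2}.
\]
Since $(1/4)\log_2(\tau^{-1}) > 0$, it suffices to prove the same bound with $R+1/4$ replaced by $R$.

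The second step is simply to check that the choice $R \ge (p+2+\delta)\gamma$ in the algorithm meets this requirement. With $\tau^{-1} = 20/3$ we have $\log_2(\tau^{-1}) = \log_2(20/3)$ and $\tau^{-1/2} = \sqrt{20/3}$, so the specifications $\gamma \gtrsim 1/\log_2(20/3)$ and $\delta \gtrsim (2/3)\log_2(e)(\sqrt{20/3}-1)x^{3/2}$ give exactly $(p+2+\delta)\gamma \ge (p+2+(2/3)\log_2(e)(\tau^{-1/2}-1)x^{3/2})/\log_2(\tau^{-1})$, closing the argument.

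I do not anticipate a real obstacle: the proof is a one-shot combination of Lemma~\ref{lem:bounds G} with the way the parameters $\gamma$ and $\delta$ were tailored in lines~1--4 of Algorithm~\ref{algoG}. The only care needed is to verify that the applicability condition $x \ge 1/2$ of Lemma~\ref{lem:bounds G} propagates to $\tau^{-1/3}x$ (it does, trivially), and to keep track of the sign of $\log_2\tau$ when moving it across the inequality.
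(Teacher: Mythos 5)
Your proof is correct and follows essentially the same route as the paper: apply Lemma~\ref{lem:bounds G} at both $x$ and $\tau^{-1/3}x$, bound the ratio $G(\tau^{-1/3}x)/G(x)$, and observe that the algorithm's choice of $R$ via $\gamma$ and $\delta$ yields exactly $R\log_2(\tau^{-1}) \ge \ttarget + 2 + \tfrac23\log_2(e)\,(\tau^{-1/2}-1)\,x^{3/2}$. The only (immaterial) difference is that you retain the harmless factor $\tau^{1/4}$ a bit longer, whereas the paper discards it at once using $\tau<1$.
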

\begin{IEEEproof}
  It follows from Lemma~\ref{lem:bounds G} (and $\tau < 1$) that
    \[ \frac{G(x)}{G(\tau^{-1/3} x)}
        \ge
        \frac14 \exp\left(\frac{2}{3} x^{3/2} (1-\tau^{-1/2}) \right),
    \]
    and the algorithm ensures that 
    \[ R \log_2 (\tau^{-1}) \ge \ttarget + 2 + \frac23 x^{3/2}(\tau^{-1/2}-1) \log_2(e), \]
    whence $\tau^{R} \le \frac{1}{4} \, 2^{-\ttarget} \, \exp(\frac{2}{3}\,x^{3/2}(1-\tau^{-1/2}))$ and the result.
\end{IEEEproof}

We can now prove the correctness of Algorithm~\ref{algoG}:
\begin{theorem}
  The value $s$ returned by Algorithm~\ref{algoG} satisfies $|G(x)-s| \le 3\cdot 2^{-\ttarget} G(x)$.
\end{theorem}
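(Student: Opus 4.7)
The plan is to invoke the triangle inequality
\[ |G(x)-s| \le \Bigl|G(x) - \textstyle\sum_{i=0}^{N-1} G_i x^{3i}\Bigr| + \Bigl|\textstyle\sum_{i=0}^{N-1} G_i x^{3i} - s\Bigr| \]
and to bound the two summands by $2^{-\ttarget} G(x)$ and $2\cdot 2^{-\ttarget} G(x)$ respectively, so that their sum does not exceed $3\cdot 2^{-\ttarget} G(x)$. The first bound is exactly the statement of Proposition~\ref{prop:N well chosen}, which was engineered via the choice of $N$ on lines~\ref{choiceOfN}--\ref{choiceOfNlargex}, so no additional work is needed there.

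For the second piece, essentially all the substantive work is already done in the discussion immediately preceding the theorem. The loop analysis yields $s = \sum_{i=0}^{N-1} G_i x^{3i}(1+\delta_i)$ with $|\delta_i| \le 2^{-\ttarget} + \tau^{R-i}$. Using the nonnegativity of the $G_i$ (Corollary~\ref{cor:positivity}), I would split
\[ \Bigl|\textstyle\sum_{i=0}^{N-1} G_i x^{3i} - s\Bigr| \le 2^{-\ttarget} \textstyle\sum_{i=0}^{N-1} G_i x^{3i} + \textstyle\sum_{i=0}^{N-1} G_i x^{3i}\tau^{R-i}, \]
recognise that the first sum is bounded by $G(x)$, and rewrite $G_i x^{3i}\tau^{R-i} = \tau^R G_i(\tau^{-1/3}x)^{3i}$ so that the second sum is bounded by $\tau^R G(\tau^{-1/3}x)$. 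The lemma placed immediately before the theorem then converts this last factor into a further contribution of at most $2^{-\ttarget} G(x)$, finishing the argument.

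The ``hard part'' of this final statement is therefore not the statement itself but rather the chain of ingredients feeding into it: the Mattheij--van~der~Sluis bound (Theorem~\ref{thm:MvdS}) specialised in Corollary~\ref{cor:MvdS to our case}, the tracking of roundoff through Horner's scheme and the backward loop, and the delicate calibration of $N$, $R$, and $\twork$ embedded in Algorithm~\ref{algoG}. Once those pieces are in place, the theorem reduces to a two-line assembly, and I expect the write-up to be no longer than that.
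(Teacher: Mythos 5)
Your proposal is correct and matches the paper's own proof: the paper likewise combines the triangle inequality with Proposition~\ref{prop:N well chosen} for the truncation term and with the preceding loop analysis (including the bound $\tau^{R} G(\tau^{-1/3}x) \le 2^{-\ttarget}G(x)$ from the lemma just before the theorem) for the summation error. The write-up in the paper is indeed the two-line assembly you anticipate.
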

\begin{proof}
  It follows from Prop.~\ref{prop:N well chosen} and the above discussion, since $|G(x)-s| \le |G(x) - \sum_{i=0}^{N-1} G_i x^{3i}| + |s - \sum_{i=0}^{N-1} G_i x^{3i}|$.
\end{proof}
The remainder of this section is devoted to the proof of Corollary~\ref{cor:MvdS to our case}.
We begin with a crucial lemma.
Let $(d_n)$ be the solution of~\eqref{eq:norm rec G} defined by $d_0 = d_1 = 1$, let $\eta(n)=1/(3n^2)$, and let $r(n)$ be as in Corollary~\ref{cor:MvdS to our case}.
\begin{lemma} \mightbeomitted
  \label{lem:bound dn}
  For all $n \ge 1$, we have $d_{n+1} \le d_n \le (1+\eta(n))\,d_{n+1}$.
\end{lemma}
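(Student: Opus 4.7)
\begin{IEEEproof}[Proof plan]
The plan is to prove the statement by induction on $n$, tracking the ratio $\rho_n = d_n/d_{n+1}$ and establishing the two-sided bound $1 \le \rho_n \le 1+\eta(n)$ simultaneously.

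First I rewrite the recurrence~\eqref{eq:norm rec G} for $(d_n)$ in the form
\[
d_{n+2} = \frac{(n+1)(n+2)}{(3n+4)(3n+5)}\,(10\, d_{n+1} - d_n),
\]
which, dividing through by $d_{n+1}$, yields the nonlinear recurrence on the ratios:
\[
\rho_{n+1} = \frac{(3n+4)(3n+5)}{(n+1)(n+2)\,(10-\rho_n)}.
\]
For the base case $n=1$, a direct computation from $d_0=d_1=1$ gives $d_2 = 9/10$, hence $\rho_1 = 10/9$, which lies in $[1,\,1+\eta(1)] = [1, 4/3]$.

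For the induction step, I assume $1 \le \rho_n \le 1+\eta(n)$ and want both bounds at index $n+1$. The lower bound $\rho_{n+1}\ge1$ is easy: the hypothesis $\rho_n\ge1$ gives $10-\rho_n\le 9$, and since $(3n+4)(3n+5) = 9(n+1)(n+2)+2 > 9(n+1)(n+2)$, we get $\rho_{n+1} > 1$. For the upper bound, the hypothesis $\rho_n \le 1+\tfrac{1}{3n^2}$ gives $10-\rho_n \ge 9\bigl(1-\tfrac{1}{27n^2}\bigr)$, and using the identity $(3n+4)(3n+5)/[9(n+1)(n+2)] = 1+\tfrac{2}{9(n+1)(n+2)}$, we find
\[
\rho_{n+1} \le \frac{1+\tfrac{2}{9(n+1)(n+2)}}{1-\tfrac{1}{27 n^2}}.
\]
It remains to check that this is at most $1+\tfrac{1}{3(n+1)^2}$, which after clearing denominators boils down to an elementary polynomial inequality of the form $3n^3+12n^2-8n-4\ge 0$, obviously true for all $n \ge 1$.

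The only mildly delicate step is the upper-bound computation, because the first-order estimates for the two factors involved produce leading constants ($\tfrac{2}{9n^2}$ and $\tfrac{1}{27 n^2}$) that must be compared to the target slack $\tfrac{1}{3 n^2}=\tfrac{9}{27n^2}$. Fortunately the margin ($7/27$ versus $9/27$ to leading order) is comfortable, so no refinement of the induction hypothesis is needed and the polynomial inequality is strict with room to spare.
\end{IEEEproof}
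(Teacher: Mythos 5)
Your proof is correct and follows essentially the same route as the paper: an induction on $n$ where the hypothesis $1 \le d_n/d_{n+1} \le 1+\eta(n)$ is pushed through the recurrence, the lower bound coming from $r(n) \ge 9$ and the upper bound from the inequality $r(n)/(9-\eta(n)) \le 1+\eta(n+1)$, which is exactly the polynomial check $3n^3+12n^2-8n-4\ge 0$ you spell out. The only difference is cosmetic (you phrase it via the ratio $\rho_n = d_n/d_{n+1}$ and make the final polynomial inequality explicit, which the paper leaves to the reader).
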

\begin{IEEEproof}
  We proceed by induction. Since $d_2 = 9/10$,  the property is true for $n=1$. Now, supposing it for an arbitrary~$n$, we get $(9-\eta(n)) d_{n+1} \le 10d_{n+1}-d_n \le 9d_{n+1}$, so
\[ 
  \frac{9-\eta(n)}{r(n)} d_{n+1} \le d_{n+2} \le \frac{9}{r(n)}d_{n+1}.
\]
We conclude by observing that $9/r(n) \le 1$ and $r(n)/(9-\eta(n)) \le 1+\eta(n+1)$ for $n \ge 1$.
\end{IEEEproof}
\begin{corollary} \mightbeomitted
  \label{cor:estim dn}
  For all $n$, we have $0.783 \le d_n \le 1$.
\end{corollary}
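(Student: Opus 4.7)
The plan is to handle the two bounds separately; only the lower one is substantive. For the upper bound, Lemma~\ref{lem:bound dn} gives $d_{n+1} \le d_n$ for $n \ge 1$, so a trivial induction from $d_1 = 1$ (together with $d_0 = 1$) yields $d_n \le 1$ for all~$n$.

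For the lower bound I would exploit the same lemma in the other direction. Iterating $d_{n+1} \ge d_n/(1+\eta(n))$ starting from any fixed index~$N$ and telescoping gives, for all $M \ge N$,
\[ d_M \;\ge\; d_N \prod_{k=N}^{M-1} \frac{3k^2}{3k^2+1}. \]
Since $(d_n)_{n \ge 1}$ is decreasing, it has a limit $L \ge 0$ and it suffices to prove $L \ge 0.783$; passing to $M \to \infty$ above expresses the limit itself in terms of $d_N$ and the corresponding infinite product.

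I would then bound that product from below via $\log(1+x) \le x$ together with the standard integral estimate $\sum_{k=N}^\infty 1/k^2 \le 1/(N-1)$, obtaining $L \ge d_N \exp(-1/(3(N-1)))$. The final step is to choose~$N$ numerically large enough that the right-hand side exceeds~$0.783$, and to verify the finitely many remaining inequalities $d_n \ge 0.783$ for $n < N$ by direct rational (or interval-arithmetic) evaluation of the recurrence, in the same spirit as the finite check used in the proof of Corollary~\ref{cor:positivity}.

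The main obstacle I anticipate is tightness: the limit~$L$ appears to lie only slightly above~$0.783$, so $N$ must be chosen moderately large before the exponential tail correction becomes small enough to close the gap. Once a workable~$N$ is identified, the rest reduces to a routine finite verification.
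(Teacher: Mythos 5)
Your plan is correct and essentially the paper's own proof: the upper bound by monotonicity from Lemma~\ref{lem:bound dn}, and the lower bound by telescoping that lemma from a moderate cutoff~$N$, bounding the tail of $\prod_{k\ge N}(1+\eta(k))$, certifying $d_N$ by rigorous evaluation of the recurrence (the paper takes $N=100$), and invoking monotonicity to cover the smaller indices. The only cosmetic difference is the tail estimate: the paper uses the closed form $\prod_{k\ge1}\bigl(1+\tfrac{1}{3k^2}\bigr)=\tfrac{\sqrt3}{\pi}\sinh\bigl(\tfrac{\pi}{\sqrt3}\bigr)$, whereas you use $\log(1+x)\le x$ together with $\sum_{k\ge N}k^{-2}\le 1/(N-1)$; both lose essentially nothing at $N\approx 100$, and your worry about tightness is justified (the limit of $(d_n)$ exceeds $0.783$ only by a few times $10^{-4}$), so your variant closes with a cutoff of the same order.
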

\begin{IEEEproof}
  By Lemma~\ref{lem:bound dn}, $(d_n)$ is decreasing and $d_0 = 1$: this proves the right-hand side.
  For $n \geq 100$, Lemma~\ref{lem:bound dn} implies
\[ d_{100} \le d_n\,\prod_{i=100}^{n-1} (1+\eta(i))
  \le d_n \frac{p_{\infty}}{p_{99}}, \quad p_k=\prod_{i=1}^k (1+\eta(i)).
\]
Using the exact value
$p_{\infty} = \frac{\sqrt{3}}{\pi} \sinh(\frac{\pi}{\sqrt{3}})$
\cite[Eq.~4.36.1]{DLMF},
we check that
$d_n \geq d_{100} \, p_{99} \, p_{\infty}^{-1} \geq 0.783$
for $n \geq 100$.
As $(d_n)$ is decreasing, the inequality holds for $n<100$ too.
\end{IEEEproof}

This estimate, combined with Corollary~\ref{cor:positivity} gives almost all we need to check the hypotheses of Theorem~\ref{thm:MvdS}. We use the notation introduced for the statement of the theorem (specialized to the computation of $c_n=n!^2 G_n$ using~\eqref{eq:norm rec G}, with $(d_n)$ as above).

\begin{corollary} \mightbeomitted
  \label{cor:ineqs for MvdS}
  The sequences $(c_n)$, $(d_n)$ and $(c_n/d_n)$ are decreasing. Moreover, the following inequalities hold: $H \le 2$, $|y_{R2}/y_{R1}| \le \frac16$, $\|y_R\|_{\infty}/|y_{R1}| = 1$, and $\det U_i \ge \frac34 c_i^2$.
\end{corollary}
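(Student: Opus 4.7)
The plan is to derive all five assertions directly from Corollary~\ref{cor:positivity}, Lemma~\ref{lem:bound dn}, and Corollary~\ref{cor:estim dn}, handling a couple of small-index edge cases by explicit calculation. The decrease of $(c_n)$ is immediate from Corollary~\ref{cor:positivity}, which gives $0 \le c_{n+1}/c_n \le \tau < 1$; that of $(d_n)$ is Lemma~\ref{lem:bound dn}. For the ratio $(c_n/d_n)$, I would verify $c_{n+1}/c_n \le d_{n+1}/d_n$: the case $n=0$ is trivial since $d_0=d_1=1$, while for $n\ge 1$ Lemma~\ref{lem:bound dn} gives $d_{n+1}/d_n \ge 1/(1+\eta(n)) \ge 3/4 > \tau$. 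The bound $H \le 2$ then follows by iterating $c_{i+1}/c_i \le \tau$ to get $c_i \le \tau^i c_0$ and combining with the lower bound $d_i \ge 0.783$ from Corollary~\ref{cor:estim dn}, which yields $H \le (0.783\,(1-\tau))^{-1} < 1.51$.

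Next I would compute $y_R = U_R^{-1}\transp{(1,0)}$ explicitly. A direct inversion gives $\det U_R = c_R^2(d_{R+1}/d_R - c_{R+1}/c_R)$, positive by the monotonicity of $(c_n/d_n)$ just established, and
\[ y_R = \frac{1}{\det U_R}\transp{\bigl(c_R d_{R+1}/d_R,\ -c_{R+1}\bigr)}. \]
The identity $\|y_R\|_\infty = |y_{R1}|$ is then equivalent to $c_{R+1} \le c_R d_{R+1}/d_R$, i.e., to the very same monotonicity. For $|y_{R2}/y_{R1}| = (c_{R+1}/c_R)(d_R/d_{R+1}) \le 1/6$, I would combine $c_{R+1}/c_R \le \tau = 3/20$ with $d_R/d_{R+1} \le 10/9$; the latter holds trivially for $R=0$, by direct computation for $R=1$ (where $d_2=9/10$ gives $d_1/d_2 = 10/9$), and by Lemma~\ref{lem:bound dn} for $R\ge 2$ (since $1+\eta(R) \le 13/12 < 10/9$). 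The final assertion is the analogous estimate $\det U_i/c_i^2 = d_{i+1}/d_i - c_{i+1}/c_i \ge 9/10 - 3/20 = 3/4$, using the same case split for $d_{i+1}/d_i \ge 9/10$.

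The only real obstacle is the small-index bookkeeping: Lemma~\ref{lem:bound dn} is not tight enough at $n=1$ to yield $d_n/d_{n+1} \le 10/9$ (it only gives $4/3$), so the critical $R=1$ case for the $1/6$ bound and the $i=1$ case for the $3/4$ bound must each be dispatched separately by computing the first values of $(d_n)$ explicitly from the normalized recurrence. Once that is done, the rest is pure algebra on the constants provided by the preceding results.
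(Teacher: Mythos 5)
Your proposal is correct and follows essentially the same route as the paper's proof: monotonicity of $(c_n/d_n)$ from $\tau$ versus $1/(1+\eta(n))$, the bound $H \le (0.783\,(1-\tau))^{-1}$, the explicit formula for $y_R = U_R^{-1}\transp{(1,0)}$, and the ratios $d_i/d_{i+1} \le 10/9$ and $d_{i+1}/d_i \ge 9/10$ with the small-index cases settled via the exact value $d_2 = 9/10$ (which the paper already computes in the proof of Lemma~\ref{lem:bound dn}). No gaps.
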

\begin{IEEEproof}
  Corollary~\ref{cor:positivity} shows that $(c_n)$ is decreasing and Lemma~\ref{lem:bound dn} shows that $(d_n)$ is decreasing. Together, they imply
$
  \frac{c_{n+1}}{d_{n+1}} / \frac{c_n}{d_n}
  \leq \tau (1 + \eta(n))
  \leq \frac{1}{5}
$ for $n \geq 1$. We check separately that $c_1/d_1 \leq c_0/d_0$.
Hence $(c_n/d_n)$ is decreasing.

Corollary~\ref{cor:positivity} also implies $c_n \le \tau^n c_0$ for any $n$, and Corollary~\ref{cor:estim dn} shows that $d_0 = 1 \le d_n/0.783$ for any~$n$.
It follows that
\[ H = \frac{d_0}{c_0} \sum_{i=0}^{R-1} \frac{c_i}{d_i} \le \frac{1}{0.783} \sum_{i=0}^{R-1} \tau^i \le 2. \]

We have $d_0/d_1 = 1$, $d_1/d_2 = 10/9$, and
$d_i/d_{i+1} \le 1+1/(3i^2) \le 10/9$
for $i \geq 2$,
and, by definition of $U_R$ and $v_R$,
\setlength{\arraycolsep}{2pt}
\[
  y_R = U_R^{-1} v_R =
  \frac{c_R}{\det U_R}
  \begin{pmatrix}
    d_{R+1}/d_R & -1\\
    -c_{R+1}/c_R &  1
  \end{pmatrix}
  \begin{pmatrix}
    1 \\ 0
  \end{pmatrix}
  =
  \frac{c_R}{\det U_R}
  \begin{pmatrix}
    d_{R+1}/d_R \\ -c_{R+1}/{c_R}
  \end{pmatrix}.
\]
It follows that
$|y_{R2}/y_{R1}| \le \frac{10}{9} \tau = \frac16$,
and hence
$\|y_R\|_{\infty} = |y_{R1}|$.

Finally, from the expression $\det U_i = c_i^2\,(d_{i+1}/d_i - c_{i+1}/c_i)$, we obtain $c_i^{-2}\, \det U_i \geq (\frac{9}{10} - \tau) = \frac34$.
\end{IEEEproof}
\begin{lemma}
A suitable value for $\Phi$ is $39$.
\end{lemma}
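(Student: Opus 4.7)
The plan is to bound $\Phi_i = \|U_i^{-1}\mathcal{G}_i U_{i+1}\|_\infty$ uniformly in~$i$ by making $\mathcal{G}_i$ explicit from lines~\ref{matrixMul1}--\ref{matrixMul2} of Algorithm~\ref{algoG} and then combining the result with the algebraic estimates of Corollary~\ref{cor:ineqs for MvdS}.

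First, I would pin down the structure of $\mathcal{G}_i$. Line~\ref{matrixMul2} is an exact copy, so the bottom row of $\mathcal{G}_i$ vanishes. Line~\ref{matrixMul1} performs four rounded floating-point operations: the integer products $(3i+4)(3i+5)$ and $(i+1)(i+2)$ are representable exactly thanks to the constraint on~$\twork$ imposed in line~\ref{choiceeOft}. Expanding
\[
  v_i[1] = \bigl(10\,a\,(1+\delta_1) - r(i)\,b\,(1+\delta_2)(1+\delta_3)\bigr)(1+\delta_4)
\]
with $|\delta_j| \le 2^{-\twork}$ and matching the outcome against $(B_i + 2^{-\twork}\mathcal{G}_i)v_{i+1}$, one reads off $|\mathcal{G}_i[1,1]| \le 10\cdot 2.01$ and $|\mathcal{G}_i[1,2]| \le 3\,r(i)\cdot 1.01 \le 30.3$, using $r(i)\le r(0)=10$ and $2^{-\twork}\le 2^{-9}$.

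Second, I would bound each factor of $U_i^{-1}\mathcal{G}_i U_{i+1}$ in turn. From the closed form
\[
  U_i^{-1} = \frac{1}{\det U_i}\begin{pmatrix} c_i\,d_{i+1}/d_i & -c_i\\ -c_{i+1} & c_i\end{pmatrix},
\]
the inequalities $d_{i+1}/d_i \le 1$, $c_{i+1} \le c_i$ and $\det U_i \ge \tfrac34 c_i^2$ (the last from Corollary~\ref{cor:ineqs for MvdS}) yield $\|U_i^{-1}\|_\infty \le 8/(3c_i)$. The product $\mathcal{G}_i U_{i+1}$ still has zero bottom row, and its top row is controlled using $\alpha_{i+1}=c_{i+2}/c_{i+1}\le\tau$ (Corollary~\ref{cor:positivity}) and $\beta_{i+1}=d_{i+2}/d_{i+1}\le 1$ (Lemma~\ref{lem:bound dn}), giving
\[
  \|\mathcal{G}_i U_{i+1}\|_\infty \le c_{i+1}\bigl(2\,|\mathcal{G}_i[1,1]| + |\mathcal{G}_i[1,2]|(\alpha_{i+1}+\beta_{i+1})\bigr) \le 75.1\,c_{i+1}.
\]

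Finally, combining by submultiplicativity of the $\infty$-norm and $c_{i+1}/c_i\le\tau = 3/20$, I obtain $\Phi_i \le \tfrac{8}{3c_i}\cdot 75.1\,c_{i+1} \le \tfrac{8\cdot 75.1}{3}\tau \approx 30.05$, uniformly in~$i$. The value $\Phi=39$ is therefore a valid choice with a comfortable safety margin. The main obstacle is purely bookkeeping: one must carefully track sign conventions and the second-order $(1+O(2^{-\twork}))$ corrections that turn clean factors like ``$2$'' and ``$3$'' into ``$2.01$'' and ``$3.03$''. Because the matrices are $2\times 2$ with a structurally zero bottom row of $\mathcal{G}_i$, the calculation remains entirely elementary.
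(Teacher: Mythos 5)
Your overall strategy is the same as the paper's (make $\mathcal{G}_i$ explicit from the roundoff analysis of lines~\ref{matrixMul1}--\ref{matrixMul2}, note its zero bottom row, then combine with $\det U_i \ge \tfrac34 c_i^2$, $c_{i+1}/c_i \le \tau$, $d_{i+1}/d_i \le 1$), but there is a genuine gap in the first step, and your second step is too lossy to survive its repair. You justify the small entries of $\mathcal{G}_i$ by claiming that $(3i+4)(3i+5)$ and $(i+1)(i+2)$ are exactly representable ``thanks to the constraint on~$\twork$ imposed in line~\ref{choiceeOft}''. That line only guarantees $(R+2)\,2^{-\twork} \le 2^{-9}$, i.e.\ $R+2 \le 2^{\twork-9}$; this makes the individual factors $3i+4,\dots$ fit in $\twork$ bits, but the products are of size about $9R^2$, which can exceed $2^{\twork}$ for perfectly legal choices of $R$ and $\twork$ (e.g.\ $\twork=64$, $R=2^{40}$). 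The paper sidesteps this by \emph{assuming} the multiplication by $r(i)$ is done through four successive multiplications and divisions by the small integer factors, which costs five roundings on the $r(i)b$ term and leads to the cruder entrywise bounds $|\mathcal{G}_i[1,1]|\le 40$, $|\mathcal{G}_i[1,2]|\le 10\,r(i)$ via Prop.~\ref{errorsAccumulation}. Your tighter constants $20.1$ and $30.3$ therefore rest on an implementation model you have not justified (it could be salvaged by invoking exact integer scaling independent of $\twork$, but not by line~\ref{choiceeOft}).

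This matters because your final combination, $\Phi_i \le \|U_i^{-1}\|_\infty\,\|\mathcal{G}_i U_{i+1}\|_\infty$, discards the key structural fact that $\mathcal{G}_i U_{i+1}$ has a zero bottom row, so that only the \emph{first column} of $U_i^{-1}$ (whose entries are at most $\tfrac{4}{3c_i}$ and $\tfrac{4}{3c_i}\cdot\tfrac{3}{20}$) ever multiplies the nonzero row; using the full row sum $\tfrac{8}{3c_i}$ loses essentially a factor~$2$. With your (unproven) rounding counts this still lands at about $30$, but with the paper's convention it gives roughly $\tfrac{8}{3}\tau\,(2\cdot 40 + 10\,r(i)(1+\tau)) \approx 78$, and even with sharp per-operation accounting (five roundings, so $|\mathcal{G}_i[1,2]|\lesssim 5.05\,r(i)$) it comes out slightly above $39$. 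The paper instead bounds $|U_i^{-1}|\,|\mathcal{G}_i|\,|U_{i+1}|$ entrywise and takes the norm afterwards, obtaining $\Phi_i \le 16 + \tfrac{23}{10}\,r(i) \le 39$; the analogous structural refinement in your setup would be $\Phi_i \le \tfrac{4}{3}\tau\bigl(2|\mathcal{G}_i[1,1]| + (1+\tau)|\mathcal{G}_i[1,2]|\bigr)$, which does reach $39$ even with the paper's counts. So either justify your exact-product model properly or tighten the matrix combination; as written, the claimed ``comfortable safety margin'' evaporates once the representability claim is removed.
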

\begin{IEEEproof}
  The multiplication $B_i v_i$ is performed on lines~\ref{matrixMul1}
  and~\ref{matrixMul2} of the algorithm.
  Denoting by $a'$ and $b'$ the new values of $a$ and~$b$, we can write $b'=a$ and
  \[ a' = 10a\rnd{2} - r(i) b \rnd{5} = 10a(1+\theta) - r(i) b(1+\theta')\]
  where (since $t \geq 5$ by line~\ref{choiceeOft})
  $|\theta| \le 4\cdot 2^{-\twork}$
  and
  $|\theta'| \le 10\cdot 2^{-\twork}$
  by Prop.~\ref{errorsAccumulation}.
  (This assumes that the multiplication by~$r(i)$ is performed through four successive multiplications and divisions). Therefore, we have
\[
    |\mathcal{G}_i| \le
  \begin{pmatrix}
    4 \cdot 10 & 10\, r(i)\\
    0 & 0
  \end{pmatrix},
\]
where, for matrices, $|A| \le |B|$ means that the inequality holds entrywise. Since moreover
\begin{align*}
  |U_i^{-1}|
  &\le \frac{|c_i|}{|\det U_i|}\,
  \begin{pmatrix}
    d_{i+1}/d_i & 1\\
    c_{i+1}/c_i & 1
  \end{pmatrix} \le \frac{4}{3c_i}\,
  \begin{pmatrix}
    1 & 1\\
    3/20 & 1
  \end{pmatrix},
  \\
  |U_{i+1}|
  &\le c_{i+1}\,
  \begin{pmatrix}
    1 & 1\\
    c_{i+2}/c_{i+1} & d_{i+2}/d_{i+1}
  \end{pmatrix} \le c_{i+1}\,
  \begin{pmatrix}
    1 & 1\\
    3/20 & 1
  \end{pmatrix},
\end{align*}
and $\Phi_i \le |U_i^{-1}| \cdot |\mathcal{G}_i| \cdot |U_{i+1}|$, we get
\[ 
  \Phi_i \le \frac{4}{3}\cdot\frac{3}{20}\,
  \begin{pmatrix}
    40+\frac{3}{2}r(i) & 40+10r(i)\\
    \frac{3}{20}\,(40+\frac{3}{2}r(i)) & \frac{3}{20}\,(40+10r(i))\\
  \end{pmatrix}.
  \]
  Hence $\|\Phi_i\| \le 16 + \frac{23}{10}\,r(i)$. The
result follows because $r(i) \le 10$ for all $i \geq 0$.
\end{IEEEproof}

We can finally prove Corollary~\ref{cor:MvdS to our case}.
\begin{IEEEproof}
    To apply~Theorem~\ref{thm:MvdS}, we need to check that
    \begin{align*}
      \text{(i)}~ & 39\,R 2^{-\twork} \le 0.1,
      &\text{(ii)}~ & \frac{c_R\,d_0}{d_R\,c_0} \le 0.1\,\left|\frac{y_{R1}}{y_{R2}}\right|, \\
      \text{(iii)}~&(R+2)(50.7\cdot 2^{-\twork})\le 0.1.
    \end{align*}
    By definition of $\twork$, we have $R+2 \le 2^{\twork-9}$, so (iii) is satisfied, and (i) follows immediately.
    Corollary~\ref{cor:ineqs for MvdS} combined with the inequalities $d_0/d_R \le 1/0.783$ and $c_R/c_0 \le \tau^R \le \tau$ implies~(ii).

In conclusion, we can apply Theorem~\ref{thm:MvdS}, hence 
\[ T_i = 1.5
\left(\frac{c_R\,d_i}{c_i\,d_R} + \frac{c_R\,d_0}{c_0\,d_R}\right)
\left|\frac{y_{R2}}{y_{R1}}\right| \le \frac{1.5}{0.783}\,\left(\tau^{R-i} +
\tau^{R}\right)\,\frac{1}{6}. \] Therefore, $T_i \le 0.639 \cdot \tau^{R-i} \le \tau^{R-i}$ as announced.
Corollary~\ref{cor:ineqs for MvdS} yields $R_i = 1.5 \, (50.7\cdot 2^{-\twork}) (i+2H) \le 76.5\cdot (i+4)\,2^{\-\twork}$.
\end{IEEEproof}

\section{Evaluation of the auxiliary series}
\label{sec:evaluation of F}

The implementation of the auxiliary series $F$ is much easier than that of~$G$.
We limit ourselves to a sketch of the (fairly standard) algorithm.

A variable $a_0$ is used to successively evaluate $F_0$, $F_3 x^3$, $F_6 x^6$, etc., using the recurrence~\eqref{eq:rec F}. Accordingly, two variables $a_1$ and $a_2$ are used to evaluate the successive values of $F_{3i+1} x^{3i+1}$ and $F_{3i+2} x^{3i+2}$. Each step adds at most $10$ to the relative error counter of each variable. A variable $s$ is used to accumulate the sum as the variables $a_k$ are updated. Therefore, after step $K$, we can write $s = \sum_{i=0}^{3K-1} F_i x^i \rnd{1+10\lfloor i/3 \rfloor+3K-i}$ (the term $3K-i$ representing the errors due to additions). Bounding all errors uniformly, we get
\[ s = \sum_{i=0}^{3K-1} F_i x^i \rnd{10K}. \]

It is easy to see that $q(i) = F_{i+3}/F_i$ decreases for $i\ge
1$, hence the loop can be stopped as soon as (i) $q(3K) x^3 < 1/2$, and (ii)
$a_0, a_1, a_2 < 2^{\EXP(s)-\ttarget-4}$. These conditions ensure that the
remainder $\sum_{i\ge 3K} F_i\,x^{i}$ is bounded by $2(F_{3K} + F_{3K+1} + F_{3K+2}) < 4(a_0+a_1+a_2) < \frac{12}{16} 2^{-\ttarget}\,2^{\EXP(s)}$.

It is clear from Prop.~\ref{prop:rec F} that $F_{n+3} \le 4F_n/n^2$ for
any~$n$, hence (using $n! \approx (n/e)^n$) we have $F_n \approx
(4e^2/n^2)^{n/3}$. This is most likely an overestimation of the true value.
Moreover, we can approximate $F(x)$ by $\frac1{32} x^{-1/2} \exp(\frac{4}{3}
x^{3/2})$ for $x>0.5$. These estimates are used to get a rough overestimation of $K$. The working precision~$\twork$ is then chosen so that $20K\cdot2^{-\twork} \le 2^{-3-\ttarget}$. Hence, we have $|\sum_{i=0}^{3K-1} F_i x^i - s| \le 2^{-3-\ttarget}\,s \le 2^{\EXP(s)-3-\ttarget}\,$.

The initial estimation of the truncation rank is very unlikely to be
underestimated. In the case it would be smaller than the actual truncation rank
decided on-the-fly by the above criterion, this is checked \emph{a posteriori} and, if necessary, the evaluation is run again with an updated working precision.

We conclude by writing $|F(x)-s| \le |F(x)-\sum_{i=0}^{3K-1} F_i x^i| +
|s-\sum_{i=0}^{3K-1} F_i x^i| \le \frac{7}{8} 2^{-\ttarget}\,2^{\EXP(s)} \le 2^{-\ttarget} s$.

\section{Complete algorithm}
\label{sec:implementation}
Assume $\ttarget \ge 3$. From the previous sections, it appears that we computed $\widehat{G}$ such that $\widehat{G} = G(x)(1+\delta_1)$ with $|\delta_1| \le 3\cdot 2^{-\ttarget}$, and we computed $\widehat{F}$ such that $F(x) = \widehat{F}(1+\delta_2)$ with $|\delta_2| \le 2^{-\ttarget}$. Hence $\widehat{G}/\widehat{F} = (G(x)/F(x))(1+\delta_3)$ where $|\delta_3| = |\delta_1 + \delta_2(1+\delta_1)| \le 5 \cdot 2^{-\ttarget}$. Indeed, the division is performed in floating-point arithmetic at precision~$\ttarget$, leading a final result $\widehat{A} = (\widehat{G}/\widehat{F})(1+\delta_4)$ with $|\delta_4| \le 2^{-\ttarget}$. Hence, finally, $\widehat{A} = \Ai(x)(1+\delta_5)$ with $|\delta_5| = |\delta_3 + \delta_4(1+\delta_3)| \le 7\cdot 2^{-\ttarget} \le 2^{-(\ttarget - 3)}.$

We developed a prototype implementation of this algorithm, based on the multiple-precision floating-point library MPFR~\cite{FousseHanrotLefevrePelissierZimmermann2007}.
For simplicity, we supposed $x \geq 1/2$ in the present paper, but our implementation is valid for any $x \geq 0$.

We compared the results with those of the implementation of $\Ai(x)$ available in MPFR, run with a larger precision.
Random tests using thousands of points~$x$ and target precisions~$\ttarget$ yield relative errors smaller than $2^{-(\ttarget - 3)}$ between the result of our implementation and the result of~MPFR, as predicted by theory.

\begin{figure}
  \centerline{\includegraphics[width=6cm]{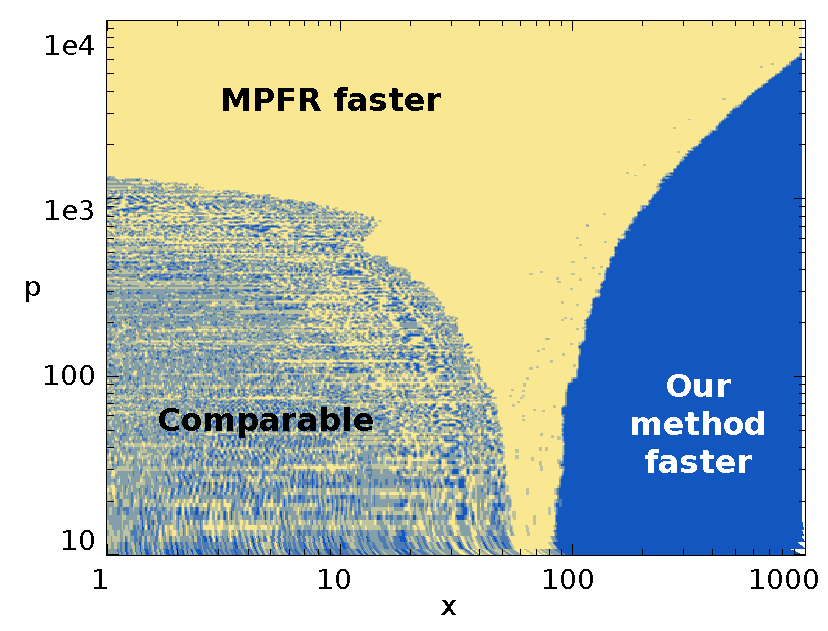}}
  \caption{Fastest method as a function of $x$ and $\ttarget$ (scales are logarithmic). MPFR~3.1.1, GMP~4.3.1, on an Intel Xeon at~2.67GHz.\label{fig:comparison with MPFR}}
\end{figure}

We also ran our implementation and MPFR with the same accuracy to compare their performance (see Fig.~\ref{fig:comparison with MPFR}).
When $x$ is large, our method is faster than MPFR, which uses the Taylor expansion of $\Ai$ at the origin. This is all the benefit of reducing the cancellation: for large $x$, the working precision of MPFR has a large overhead because of the bad condition number of the series.
In contrast, when $\ttarget$~is large compared to~$x$, our implementation pays the cost of evaluating two series instead of one, with little benefit in terms of working precision.

To be completely fair, we should mention that MPFR does not implement the asymptotic expansion of~$\Ai$ at infinity, which should be a better choice than our algorithm for large~$x$ and comparatively small~$\ttarget$. On the other hand, our code currently uses only the naive series summation algorithm, while MPFR implements Smith's baby steps-giant steps technique~\cite{PatersonStockmeyer1973,Smith1989} that is more efficient. There is no theoretical obstacle to using Smith's method with our series: it only obfuscates a little bit the description of the algorithm and the roundoff error analysis. Once we will have implemented it, we will have a more complete picture with three areas, indicating what method between the Taylor series, the asymptotic series and our series is the most efficient, depending on $(x,\ttarget)$.

\section{Outlook: Towards a GMR algorithm?}
\label{sec:conclusion}

As mentioned in the introduction, we see the present work as a case study.
Indeed, in spite of the many technical details that occupy much of the space of this article, we really used few specific properties of the Airy function besides Equation~\eqref{eq:deq Ai}.
Looking back, our analysis essentially relies on the following ingredients.

(i)~\emph{The ability to find auxiliary series.}
The indicator functions used in the GMR method depend only on the behaviour at (complex) infinity of the entire functions they are associated to.
In the case of the solution of a LODE with analytic coefficients, this behaviour is entirely determined by the differential equation along with a finite number of ``asymptotic initial values''~\cite{Wasow1965,vdPutSinger2003}.
Once the indicator function is known, it remains to exhibit appropriate auxiliary series.
Doing this in a truly general way remains an open problem.
Yet, both the original GMR method and our variant apply to many cases, and it is likely that they can be combined and further generalized.

(ii)~\emph{An efficient way to compute their coefficients.}
This seems to be considered a major limitation in the original GMR paper~\cite{GawronskiMullerReinhard2007}.
But, as already mentioned, recurrences with polynomial coefficients automatically exist as soon as both the original function to evaluate and the auxiliary series satisfy differential equations with polynomial coefficients.
Numerical stability is not much of an issue in the case of three-term recurrences, thanks to Miller's method, though many technical details must be settled.
The situation is more complicated in general for recurrences of higher order.
Observe, though, that we proved the minimality of $(G_n)$ using essentially the same asymptotic properties that were exploited by the GMR method in the first place.
The minimality may hence not be fortuitous and might generalize.

(iii)~\emph{Upper and lower bounds on the coefficients and sums of the series
$F$~and~$G$.}
All these bounds were derived, in a pretty systematic way, from the asymptotic expansion of $\Ai$ at infinity combined with Lemma~\ref{lem:asympt approx Ai}.
Bounds similar to that from Lemma~\ref{lem:asympt approx Ai} can themselves often be obtained from a LODE~\cite{Olver1997}.

(iv)~\emph{Roundoff error analyses.}
Our error analyses follow a very regular pattern and could probably be abstracted to a more general case or automated.

In short, most steps of the present study could apparently be performed in a systematic way, starting from Eq.~\eqref{eq:deq Ai} plus a moderate amount of additional information.
Systematizing the GMR method based on this observation seems a promising line of research.
Solutions of LODE with polynomial coefficients are known in Computer Algebra as \emph{D{-}finite}, or \emph{holonomic}, functions.
It would be interesting to isolate a subclass of D-finite functions to which the method applies in a truly systematic way, and attempt to \emph{automate} it at least partially.

\paragraph*{Acknowledgments}

We thank Paul Zimmermann for pointing out the GMR article to us.

\bibliographystyle{amsplain}

\providecommand{\bysame}{\leavevmode\hbox to3em{\hrulefill}\thinspace}
\providecommand{\MR}{\relax\ifhmode\unskip\space\fi MR }
\providecommand{\MRhref}[2]{%
  \href{http://www.ams.org/mathscinet-getitem?mr=#1}{#2}
}
\providecommand{\href}[2]{#2}
\begin{thebibliography}{}

\end{thebibliography}


\begin{thebibliography}{10}

\bibitem{DLMF}
\emph{Digital library of mathematical functions}, 2010, Online companion
  to~\cite{OlverLozierBoisvertClark2010}, \url{http://dlmf.nist.gov/}.

\bibitem{BickleyComrieMillerSadlerThompson1952}
W.~G. Bickley, L.~J. Comrie, J.~C.~P. Miller, D.~H. Sadler, and A.~J. Thompson,
  \emph{Bessel functions}, Mathematical Tables, vol.~X, British Association for
  the Advancement of Science, 1952.

\bibitem{BrentZimmermann2010}
R.~P. Brent and P.~Zimmermann, \emph{Modern computer arithmetic}, Cambridge
  University Press, 2010.

\bibitem{Chevillard2012}
S.~Chevillard, \emph{The functions erf and erfc computed with arbitrary
  precision and explicit error bounds}, Information and Computation
  \textbf{216} (2012), 72 -- 95.

\bibitem{FlajoletSedgewick2009}
P.~Flajolet and R.~Sedgewick, \emph{Analytic combinatorics}, Cambridge
  University Press, 2009.

\bibitem{FousseHanrotLefevrePelissierZimmermann2007}
L.~Fousse, G.~Hanrot, V.~Lef\`evre, P.~P\'elissier, and P.~Zimmermann,
  \emph{{MPFR}: A multiple-precision binary floating-point library with correct
  rounding}, {ACM TOMS} \textbf{33} (2007), no.~2, 13:1--13:15.

\bibitem{GawronskiMullerReinhard2007}
W.~Gawronski, J.~Müller, and M.~Reinhard, \emph{Reduced cancellation in the
  evaluation of entire functions and applications to the error function}, SIAM
  Journal on Numerical Analysis \textbf{45} (2007), no.~6, 2564--2576.

\bibitem{Higham}
N.~J. Higham, \emph{{Accuracy and Stability of Numerical Algorithms}}, second
  ed., SIAM, 2002.

\bibitem{Levin1996}
B.~Y. Levin, \emph{Lectures on entire functions}, AMS, 1996.

\bibitem{MvdS}
R.~M.~M. Mattheij and A.~van~der Sluis, \emph{{Error Estimates for Miller's
  Algorithm}}, {Numerische Mathematik} \textbf{26} (1976), 61--78.

\bibitem{OlverLozierBoisvertClark2010}
F.~W. Olver, D.~W. Lozier, R.~F. Boisvert, and C.W. Clark (eds.), \emph{{NIST}
  handbook of mathematical functions}, Cambridge University Press, 2010.

\bibitem{Olver1997}
F.~W.~J. Olver, \emph{Asymptotics and special functions}, A K Peters, 1997.

\bibitem{PatersonStockmeyer1973}
M.~S. Paterson and L.~J. Stockmeyer, \emph{On the number of nonscalar
  multiplications necessary to evaluate polynomials}, SIAM Journal on Computing
  \textbf{2} (1973), no.~1, 60--66.

\bibitem{Reinhard2008}
M.~Reinhard, \emph{Reduced cancellation in the evaluation of entire functions
  and applications to certain special functions}, Ph.D. thesis, Universität
  Trier, 2008.

\bibitem{Robbins1955}
H.~Robbins, \emph{A remark on {S}tirling's formula}, The American Mathematical
  Monthly \textbf{62} (1955), no.~1, 26--29.

\bibitem{Smith1989}
{D}.~{M}. {S}mith, \emph{{E}fficient multiple-precision evaluation of
  elementary functions}, {M}athematics of {C}omputation \textbf{52} (1989),
  no.~185, 131--134.

\bibitem{Stanley1999}
R.~P. Stanley, \emph{Enumerative combinatorics}, vol.~2, Cambridge University
  Press, 1999.

\bibitem{StegunZucker1970}
I.~A. Stegun and R.~Zucker, \emph{Automatic computing methods for special
  functions}, Journal of Research of the National Bureau of Standards
  \textbf{74B} (1970), 211--224.

\bibitem{vdPutSinger2003}
M.~van~der Put and M.~F. Singer, \emph{Galois theory of linear differential
  equations}, Springer, 2003.

\bibitem{Wasow1965}
W.~R. Wasow, \emph{Asymptotic expansions for ordinary differential equations},
  Wiley, 1965.

\bibitem{Wimp1984}
J.~Wimp, \emph{Computation with recurrence relations}, Pitman, Boston, 1984.

\end{thebibliography}

\providecommand{\bysame}{\leavevmode\hbox to3em{\hrulefill}\thinspace}
\providecommand{\MR}{\relax\ifhmode\unskip\space\fi MR }
\providecommand{\MRhref}[2]{%
  \href{http://www.ams.org/mathscinet-getitem?mr=#1}{#2}
}
\providecommand{\href}[2]{#2}

\end{document}